\def\l@subsubsection#1#2{}
\newtheorem{lemma}{Lemma}[section]
\newtheorem{corollary}[lemma]{Corollary}
\newtheorem{claim}[lemma]{Claim}
\theoremstyle{definition}
\newtheorem{definition}[lemma]{Definition}
\newtheorem{remark}[lemma]{Remark}
\newcommand{\CC}{{\mathbb{C}}}
\newcommand{\RR}{{\mathbb{R}}}
\newcommand{\ZZ}{{\mathbb{Z}}}
\newcommand{\Udis}{U_{dis}}
\newcommand{\ambient}{M}
\newcommand{\cellfixed}{L}
\begin{document}
\title{Disentangling the Generalized Double Semion Model}

\author{Lukasz Fidkowski}
\affiliation{Department of Physics, University of Washington, Seattle WA 98195, USA}

\author{Jeongwan Haah}
\affiliation{Quantum Architectures and Computation, Microsoft Research, Redmond, WA 98052, USA}

\author{Matthew B.~Hastings}
\affiliation{Station Q, Microsoft Research, Santa Barbara, CA 93106-6105, USA}
\affiliation{Quantum Architectures and Computation, Microsoft Research, Redmond, WA 98052, USA}

\author{Nathanan Tantivasadakarn}
\affiliation{Department of Physics, Harvard University, Cambridge, MA 02138, USA}

\begin{abstract}
We analyze the class of Generalized Double Semion (GDS) models in arbitrary dimensions from the point of view of lattice Hamiltonians.
We show that on a $d$-dimensional spatial manifold $M$ the dual of the GDS is equivalent, up to constant depth local quantum circuits,
to a group cohomology theory tensored with lower dimensional cohomology models that depend on the manifold $M$.
We comment on the space-time topological quantum field theory (TQFT) interpretation of this result.
We also investigate the GDS in the presence of time reversal symmetry, 
showing that it forms a non-trivial symmetry enriched toric code phase in odd spatial dimensions.
\end{abstract}

\maketitle

\section{Introduction}

Gapped quantum phases of matter are, roughly speaking, equivalence classes of gapped lattice Hamiltonians 
under smooth deformation of parameters, in the thermodynamic limit of large system size.
Some (but not all~\cite{HaahFractons}) gapped quantum phases 
also possess effective topological quantum field theory (TQFT) descriptions~\cite{KHZ, LevinWen, Freed},
which capture their universal low energy features,
such as topological ground state degeneracy and quasiparticle braiding statistics.
TQFTs can be used to distinguish such gapped phases,
but it is not known whether they provide a complete set of invariants.

One class of TQFTs that arise from gapped lattice Hamiltonians and can be defined in arbitrary dimension 
are Dijkgraaf-Witten gauge theories~\cite{DijkgraafWitten}.
The generalized double semion model (GDS)~\cite{GDS} 
is another gapped lattice Hamiltonian that can be defined in arbitrary dimension.
Given the relative scarcity of TQFTs in spatial dimensions greater than $2$,
it is natural to ask what, if any, is the relation between the two.
It was shown~\cite{GDS} that for odd space dimension $d$ 
the GDS model is equivalent to a toric code up to a local quantum circuit, 
but in even dimensions $d \geq 4$ its ground state degeneracy does not match that of 
any $\ZZ_2$ Dijkgraaf-Witten theory~\cite{DijkgraafWitten}.
It was shown~\cite{Debray2018} that one could define a TQFT such that
on any closed manifold M, the state spaces of the TQFT match the space of low-energy states of the GDS model, 
and this isomorphism is compatible with the actions of the mapping class group of $M$ on both spaces.
This TQFT is a so-called ``gauge-gravity TQFT.'' 
This term is used because the action in the TQFT combines the $\ZZ_2$ gauge field 
with Stiefel-Whitney classes of the underlying manifold.
Roughly speaking, the action includes terms $\alpha^k w_{d+1-k}$ for each $k$, 
where $w_{d+1-k}$ is a Stiefel-Whitney class and 
$\alpha$ is the $\ZZ_2$ gauge field (see also~\cite{GEM18, STY18}).
In even spatial dimensions $d \geq 4$ this action differs from that of any $\ZZ_2$ Dijkgraaf-Witten theory,
so one might be tempted to conclude that in such dimensions the GDS represents a new gapped phase.

In this paper we analyze the GDS further from the point of view of a lattice Hamiltonian,
studying its equivalence under local constant depth quantum circuits to other lattice Hamiltonians.
Despite many challenges to making the definition of gapped phases of matter mathematically precise,
one case in which one can confidently say that two gapped Hamiltonians are in the same phase 
is if their ground states are related by a circuit of local unitary operators
that has constant depth, independent of system size.
Throughout, we consider Hamiltonians so that our ``spacetime'' is always some spatial manifold $M^d$ 
multiplied by a time direction.
Roughly, our results can be summarized as: 
the \emph{dual} of the GDS is equivalent, up to local quantum circuits, 
to several copies of group cohomology models (i.e., duals of Dijkgraaf-Witten models),
including one $d+1$-dimensional model, 
as well as possibly some additional lower dimensional models depending on the manifold~$M^d$.
In general, we have a group cohomology model in all spatial dimensions $i=0,1,\ldots,d$ 
for which $w_{d-i}$ is nontrivial, 
and the model can be chosen to be supported on a representative of the $i$-dimensional homology class of the manifold 
which is Poincar\'e dual to $w_{d-i}$.
These results then give some lattice interpretation of the TQFT action~\cite{Debray2018} 
as we can regard each term $\alpha^k w_{d+1-k}$ as arising from 
one of these lower dimensional group cohomology models, 
since a group cohomology model is dual to a Dijkgraaf-Witten model.

Here, the duality means the one introduced by Wegner~\cite{Wegner1971IsingGauge,Kogut1979LGT,LevinGu}.  Informally, it simply refers
to the inverse of the process of gauging a symmetry.  More precisely, rather than considering a theory of closed fluctuating $(d-1)$-cycles, 
we regard these $(d-1)$-cycles as bounding some spin configuration, 
and define a theory of these fluctuating spins.
Such a dual model has a global $\ZZ_2$ spin-flip symmetry, 
and the original model is recovered by ``gauging'' this $\ZZ_2$ symmetry~\cite{LevinGu}.
In general, any model with a global $\ZZ_2$ symmetry can be gauged,
and any model with a $\ZZ_2$ gauge charge --- 
i.e., an emergent bosonic quasiparticle with $\ZZ_2$ fusion rules --- 
can be ``un-gauged'' by condensing a bound state of this gauge charge with a local $\ZZ_2$ charge.
These processes are inverse to each other, 
so two models are in the same phase if and only if the same is true of their duals.\footnote{Technically, one also needs to keep track of which particle represents the gauge charge.}  The gauging and ungauging processes are also sometimes referred to as ``equivariantization" and ``de-equivariantization" respectively in the mathematical physics literature \cite{DGNO, ENO}.  The duals of Dijkgraaf-Witten and GDS models have constant depth disentangling circuits,
which makes them $\ZZ_2$ symmetry-protected topological (SPT) phases~\cite{LevinGu, Wen}.
We can thus frame our discussion entirely in the context of SPT phases, 
which are far better understood than general gapped phases~\cite{Wen, Kapustin_cobordism, FreedHopkins, CenkeXu}.  

In particular, it is predicted that 
there is a $\ZZ_2 \times \ZZ_2$ classification of $\ZZ_2$ SPT phases in $4+1$ spacetime dimensions, 
with the two $\ZZ_2$ generators referred to as the ``in-cohomology'' and ``beyond-cohomology'' phases
respectively~\cite{Wen2015, Kapustin_cobordism, Kapustin_pc, Freed_pc}.  This prediction comes from a computation of the
cobordism group $\Omega_5^{\rm{SO}} (B \ZZ_2)$ using a spectral sequence argument.  This group turns out to have order
$2^3 = 8$, and one can define $3$ corresponding $\ZZ_2$ invariants of orientable 5d manifolds by pairing the fundamental class $[M]$ with
$\alpha^5$, $w_2^2 \alpha$, and $w_2 w_3$ respectively~\cite{Debray_pc}.  The last invariant corresponds to an absolutely stable phase, whereas the first two
correspond to the in-cohomology and beyond-cohomology phases, respectively.  Therefore the in-cohomology phase is just the twisted Dijkgraaf-Witten dual,
while the effective action of the beyond cohomology phase 
is the same as the TQFT that describes the GDS~\cite{Debray2018}.
One might thus be tempted to conclude that the GDS dual is in the beyond-cohomology phase.
However, our results show that this is not the case: 
in flat space, where all the Stiefel-Whitney classes vanish,
the GDS dual is $\ZZ_2$ symmetric local circuit equivalent to the twisted Dijkgraaf-Witten dual instead.
On a general manifold, the GDS dual is equivalent 
to a twisted Dijkgraaf-Witten dual stacked with lower dimensional SPT phases 
in such a way that its spacetime response reproduces that of the beyond-cohomology phase.\footnote{We conjecture that the beyond cohomology phase is characterized by the following universal property: a pair of identical $\ZZ_2$ symmetry fluxes fuse to an odd number of $E_8$ states.  This is not true for the GDS dual, since it is certainly not true for a twisted Dijkgraaf Witten dual, and the two are equivalent in flat space.}

It should be understood, however, that this does not mean that the GDS
is a tensor product of several Dijkgraaf-Witten models,
as the dual of several cohomology phases is a single gauge theory.
In fact, the local reduced density matrices on any ball 
of the ground states of the GDS and Dijkgraaf-Witten theory
are mapped into each other by a local quantum circuit.
This follows because we can choose the lower dimensional subspace
where the cohomology states are located not to intersect any given ball
by symmetric local quantum circuits.

Our results may at first seem inconsistent 
with Ref.~\cite[Thm.8.1]{GDS} and Ref.~\cite[Thm.5.32]{Debray2018}, 
which state that the GDS TQFT in $4+1$ spacetime dimensions 
is inequivalent to any Dijkgraaf-Witten theory, 
whereas here we show that in flat space, 
the GDS Hamiltonian is circuit-equivalent to the twisted Dijkgraaf-Witten theory.
In fact, there is no inconsistency.
Indeed, as we just discussed, 
the GDS Hamiltonian is dual to the in-cohomology SPT phase 
stacked with lower dimensional SPT phases 
wrapping certain cycles of the spatial manifold.
It is these lower dimensional SPT phases 
that lead to the inequivalence discussed in Refs.~\onlinecite{GDS, Debray2018}.
In particular, when the spatial manifold is $\CC P^2$, 
there is a zero dimensional SPT --- i.e. a point charge ---
which eliminates all the ground states in the gauge theory, due to Gauss' law.

Many of our results have the flavor of proving two models equivalent by explicitly constructing a circuit.
We also attempt to show that some models are {\it non-equivalent} up to local circuits.
These kinds of questions have a long history.
For example, consider the two dimensional toric code.
It can be shown directly from the ground state degeneracy on a torus that the ground states of this model 
are not equivalent to product states up to a quantum circuit~\cite{BravyiHastingsVerstraete2006generation}.
However, to prove that the ground state is not equivalent to a product state on sphere 
is more difficult~\cite{Hastings2010Locality}; see \cite{Haah2014invariant} for a more general result.
Here, unfortunately, we will not be able to prove non-equivalence in many cases,
but we will be able to give some strong heuristic arguments for non-equivalence 
by relying on existing conjectured classifications of phases.

As an example of such a non-equivalence, consider again the case of odd space dimension.
While the GDS ground states in this case are equivalent to the toric code ground states 
up to a local quantum circuit, the circuit given in \cite{GDS} breaks time reversal symmetry 
in that it requires complex terms.  
We will argue, then, using existing results, that if one imposes an appropriate time reversal symmetry, 
the GDS in odd space dimensions is not equivalent to the toric code up to a local quantum circuit.
In the case of even space dimensions, 
we will show that one can remove the Dijkgraaf-Witten models supported on odd-dimensional homology 
if time reversal symmetry is absent, but we will argue that it is not possible otherwise, at least on ${\mathbb R}P^2$ (see Claim \ref{physargument1}).
Without time reversal symmetry, we will argue that it is impossible to remove the 2d in-cohomology SPT of unitary $\ZZ_2$ symmetry from
the non-trivial $S^2 \subset \CC P^2$ (see Claim \ref{physargument2}).

The remainder of this paper is structured as follows.  
In \cref{modelanddual}, we define the GDS model and its dual, 
constructing the unitary that disentangles the dual model, and we discuss the notion of 
{\it stable equivalence}, in which we tensor with additional degrees of freedom; stably equivalent models are physically equivalent.
In \cref{cohomologysection}, we define the group cohomology models.
While these models have been defined before, 
we consider a more general definition in which cohomology gates can act on an arbitrary closed $k$-chain, 
for some $k<d$, of some $d$-dimensional simplicial complex, 
and we show that homologically equivalent $k$-chains 
are related by local quantum circuits that commute with the group.
In \cref{sec:GDScircuitVScohomology}, we show that the unitary that disentangles the GDS dual model 
is given by some product of unitaries that disentangle cohomology models,
each acting on a closed $k$-chain. 
At this point, it remains only to determine what homology class these closed $k$-chains represent.
As a technical tool, it is very convenient to work on a simplicial complex 
that is a first barycentric subdivision; so, in \cref{sec:RG},
we show how one can use local quantum circuits to pass between different triangulations.
The notion of stable equivalence is useful here.

Finally, with all this background, the main result appears in \cref{combmanifoldsection}.
We pass to the first barycentric subdivision where we are able to identify that the cohomology gates 
act on chains dual to Stiefel-Whitney classes.
The main results are \cref{thm:GDSwTR} and \cref{thm:GDSwoTR}, 
corresponding to the cases with and without time reversal symmetry.
The notion of stable equivalence is useful here as it also allows one to,
informally, replace a product of cohomology gates on different closed $k$-chains by a tensor product of such gates, 
so that one may imagine that the dual model is equivalent to disconnected systems,
one for each $k$ with a nontrivial Stiefel-Whitney class. 
As an example, on $\CC P^2$, the GDS dual is equivalent to the tensor product of three models,
one of which is a four-dimensional cohomology phase on $\CC P^2$, 
one which is a two-dimensional cohomology phase on $\CC P^1 \subset \CC P^2$,
and one of which is a zero-dimensional cohomology phase, i.e., a single spin in the $|-\rangle$ state.

Regarding time reversal symmetry,
we will always work in the basis for local degrees of freedom such that the time reversal symmetry
is simply complex conjugation.

\section{GDS Model and Its Dual}
\label{modelanddual}
We review the GDS model and the duality transformation.
Perhaps the only subtle point arises when considering the relationship between stabilization 
(tensoring with additional degrees of freedom) and duality, 
especially when gauging a tensor product of SPT; see \cref{equivalencesection}.

\subsection{Review of GDS and Toric Code Models}
The toric code and GDS take as input a cellulation of a compact manifold $\ambient^d$; 
for the GDS, this cellulation is some
fixed Voronoi cellulation $\cellfixed$ generated by points in general position 
so that all cells in its dual cellulation are simplices.
The dual cellulation is called the Delaunay triangulation.  
In addition, for this paper we require that the Delaunay triangulation be a combinatorial manifold, 
i.e., a simplicial complex%
\footnote{
It will be useful for us to consider a simplicial complex purely combinatorially.
Given a set of vertices, a $k$-simplex is defined to be a set of $(k+1)$ vertices.
A face of a simplex is a nonempty proper subset of these vertices.
A simplicial complex is a collection of simplices whose faces are all contained in the collection.
}
where the link%
\footnote{
The link of a simplex $\sigma$ in a simplicial complex is a collection of simplices $\tau$
such that $\tau \cap \sigma = \emptyset$ but $\tau \cup \sigma$ is a simplex of the complex.
For a vertex $w_j$ to be in the link of $\Delta^{d-1}$ 
there must exists a $d$-simplex $\{w_j\} \cup \Delta^{d-1}$.
Heuristically, in a triangulation of a 3-manifold,
the link of a point is the 2-sphere that surrounds the point,
the link of a 1-simplex (line segment) is the circle that winds about the 1-simplex,
and the link of a 2-simplex (triangle) is the set of two points 
that are opposite to each other with the triangle in the middle.
}
of any simplex is homeomorphic to a sphere.%
\footnote{
It is an interesting question whether a Delaunay triangulation of a PL manifold $\ambient$ 
must always be a combinatorial manifold but for this paper we require that the triangulation has this property.
}

Each $(d-1)$-cell of of the cellulation has a single qubit, and we refer to a state $\ket \uparrow$ as ``absent" 
and $\ket \downarrow$ as ``present" 
(here our notation differs from \cite{GDS} since we use $\ket \downarrow$ to denote present).
The Hamiltonian for the GDS is
\begin{align}
H_{GDS}= H = H_+ + H_\square,
\end{align}
where
\[H_+ = \sum_{(d-2)\text{-cells }g} H_g, \qquad H_\square = \sum_\text{$d$-cells $c$} H_c.\]
Each term $H_g$ in $H_+$ is zero if an even number of $(d-1)$-cells 
meeting $g$ are $\downarrow$ (present)
and $1$ if an odd number are $\downarrow$.  
Then, the zero eigenspace of $H_+$ is spanned by closed $(d-1)$-chains with $\ZZ_2$ coefficients.
Each term $H_c$ in $H_\square$
is
defined to be
\begin{align}\label{Hcdef}
H_c=\frac{1-O_c}{2},
\end{align}
where
\begin{align}
\label{Ocdef}
O_c=\pm \prod_{\text{$(d-1)$ cells $f \in \partial c$}} X_{f}.
\end{align}
The operator $X_{f}$ is the Pauli-$X$ operator acting on the cell $f$.
The sign in \eqref{Ocdef} is $- (-1)^{\chi\left(\downarrow_c\right)}$ 
where $\chi$ is the Euler characteristic and $\downarrow_c$ is the codimension zero submanifold of $\partial c$ 
consisting of the union of ($d-1$) cells of $\partial c$ which are labeled $\downarrow$ in the state on which $H_c$ acts.  
The reason for choosing a generic cellulation is to make this subset a submanifold without self-intersections.
The terms of $H_+$ pairwise commute and commute with the terms in $H_\square$, while the terms in $H_\square$ pairwise commute in the eigenspace of $H_+$ with vanishing eigenvalue \cite{GDS}.

The toric code model with degrees of freedom on $(d-1)$-cell has the same Hamiltonian as the GDS, 
except that the sign in \cref{Ocdef} is always $-1$.  
Before considering the effect of this sign on the zero energy ground states\footnote{All Hamiltonians that we consider have non-negative spectrum (being sums of commuting terms with non-negative eigenvalues).  There is a possibility for a ground state to have non-zero energy (when the Hamiltonian is frustrated), but for the most part the ground states we consider will have zero energy.  We will hence use the terms `ground state' and `zero energy ground state' interchangeably.} of $H_{GDS}$, 
first recall the situation in the toric code.  
The zero energy ground states are superpositions of $(d-1)$-cycles.  More precisely,
the zero energy ground states are superpositions of ground states confined to fixed classes $x\in H_{d-1}(\ambient;\ZZ_2)$ 
and in each class the ground state is an equal amplitude superposition of all configurations.
For $d$ odd, the ground states of $H_{GDS}$ 
are also superpositions of ground states confined to fixed classes $x\in H_{d-1}(\ambient;\ZZ_2)$, 
but now the ground state is a superposition of cycles $E$ with amplitude $i^{\chi(E)}$.
For $d$ even, there may not exist zero energy ground states of $H_{GDS}$ in every homology sector.
However, if $\ambient$ is a $\ZZ_2$ homology sphere, 
an explicit formula for the zero energy ground state amplitude on cycle $E$ can be given as $(-1)^{s(E)}$ 
where $s(E)$ is a {\it semi-characteristic}~\cite{GDS}.
Since the formula for the ground state amplitudes 
is fairly complicated for the GDS in even dimensions,
this motivates considering the dual model, 
which simplifies much of the treatment and for which the ground state amplitudes have a simple expression on all manifolds.

\subsection{Duality}

Informally, duality means that, rather than considering a theory of closed fluctuating $(d-1)$-cycles, 
we regard these $(d-1)$-cycles as bounding some spin configuration, 
i.e., in the trivial homology sector of the GDS, we write the cycle as a boundary of some $d$-chain.  
Then, we define a theory of these fluctuating spins. 
Formally, duality is defined as an isomorphism between two algebras; physically it is the inverse process to gauging.
Duality can be defined for an arbitrary group $G$,
but for this subsection we consider a tailored construction for $G=\ZZ_2$.

Consider a system of qubits identified with $d$-cells of some generic finite cell complex,
referred to as the \emph{primal} complex in this subsection.%
\footnote{
so that the Poincar\'e dual (Delaunay) is a simplicial complex.
}
We require that the topological space underlying the complex be connected.
Let algebra~$\mathcal A$ be generated by $Z_c Z_d$ and by $X_c$ where $c,d$ label $d$-cells, 
i.e., the algebra generated by even products of Pauli $Z$ operators and by arbitrary products of Pauli $X$ operators.
Now consider a different system of qubits identified with $(d-1)$-cells of the complex.
Let $\mathcal B^+$ be the algebra of operators acting on the qubits on the $(d-1)$-cells,  
generated by operators $Z_f$ for any $(d-1)$-cell $f$ 
and $\prod_{  \text{$(d-1)$ cells }f \in \partial c}  X_f$ for any $d$-cell $c$.
Let $\mathcal B = \mathcal B^+/ \mathcal J$ be the quotient algebra of $\mathcal B^+$ 
by the relation that
given any $1$-cycle on the dual complex, the product of $Z_{e'}$ over that cycle 
($e'$ is a $(d-1)$-cell in the primal cellulation) be equal to $+1$.
If the dual complex has trivial first homology, 
the two-sided ideal $\mathcal J$ of relations is generated by 
$-I + \prod_{(d-1)\text{-cells }f~:~ g \in \partial f} Z_f$ for arbitrary $(d-2)$-cell $g$
of the primal complex.

We define the duality map between $\mathcal A$ and $\mathcal B$ by
\begin{align}
\label{dualityeq}
Z_c Z_d  & \longleftrightarrow  Z_{f} &\text{if } (\partial c) \cap (\partial d) = f,  \\ 
X_c &\longleftrightarrow \prod_{(d-1)\text{-cells }f \in \partial c} X_f &\text{for any $d$-cell } c. \nonumber
\end{align}
This defines a map $\mathcal A \to \mathcal B$ uniquely: 
Since the space is connected,
given a product $Z_c Z_d$ in ${\mathcal A}$ for arbitrary $c,d$,
there is a path $c,c',c'',\ldots,d$ from $c$ to $d$ consisting of $d$-cells
on the $1$-skeleton of the dual complex.
Then, $Z_c Z_{c'}$ is mapped following \cref{dualityeq} 
to some operator in $\mathcal B^+$, as is $Z_{c'} Z_{c''}$, and so on.
Any two paths from $c$ and $d$ results in the same operator modulo $\mathcal J$
since $\mathcal J$ is precisely generated 
by products of $Z$ along the difference of such paths.
The inverse map $\mathcal B^+ / \mathcal J \to \mathcal A$
is well-defined for a similar reason.

We say that the algebra ${\mathcal A}$ is the algebra for the dual model 
and the algebra ${\mathcal B}$ is the algebra for the gauge model.
The map from ${\mathcal A}$ to ${\mathcal B}$ is referred to as ``gauging.''

\subsection{Disentangling Circuit}
We now apply this duality to the GDS.  When applying this duality, it is convenient to take the Poincar\'e dual of the complex; then, the qubits in $H_{GDS}$ are identified with $1$-cells, i.e., edges, and the qubits in the dual theory will be identified with vertices.

The operator $H_+$ modulo $\mathcal J$ is dual to a scalar, equal to zero.
The terms $O_c$ in $H_\square$ are dual to $\pm X_c$; 
this is clear since the second line of \cref{dualityeq} 
gives that $X_c$ is dual to $\prod_{(d-1)\text{-cells }f \in \partial c} X_f$ 
and the sign in $O_c$ is mapped to some diagonal operator, i.e., some other sign.

We claim (and show in \cref{Ocdual}) that the sign is such that the dual of
$O_c$ is equal to
$\Udis X_c \Udis^\dagger,$
where $\Udis$ is a diagonal unitary with eigenvalues $\pm 1$ (hence, $\Udis=\Udis^\dagger$) given by
\begin{align}
\Udis=(-1)^{\chi(\downarrow_\cellfixed)},
\end{align}
where $\downarrow_\cellfixed$ is a codimension zero subcomplex of the Voronoi cellulation $\cellfixed$ consisting of all $d$ cells labelled $\downarrow$.
We refer to $\Udis$ as the {\bf GDS dual disentangler}.
Then the ground state of the GDS dual Hamiltonian is the image under the GDS dual disentangler of the state with all qubits in the $|+\rangle$ state.

We now describe the unitary $\Udis$ in terms of local (but not $\ZZ_2$ symmetric) gates.
The spins of the GDS dual are on the vertices of the Delaunay triangulation.
Define a sequence of operators $Z,CZ,CCZ,\ldots$.
The operator $Z$ refers to the Pauli $Z$ operator on a qubit.
The operator $CZ$ is a controlled-$Z$ operator defined to be 
the diagonal operator on the two qubits which is $-1$ if both qubits are labelled $\downarrow$ and which is $+1$ otherwise.
Generally $C^kZ$ is a diagonal operator acting on $k+1$ qubits 
which is $-1$ if all qubits are labelled $\downarrow$ and $+1$ otherwise.
By the additivity formula for the Euler characteristic, 
$\Udis$ is equal to the product, 
over all $k$-simplices for $0\leq k\leq d$ of the Delaunay triangulation, 
of $C^k Z$ on that simplex.

We now show that (see also \cite{LevinGu})
\begin{lemma}\label{Ocdual}
The dual of $O_c$ is equal to $\Udis X_c \Udis$.
\begin{proof}
We have 
$O_c=-(-1)^{\chi\left(\downarrow_c\right)} \prod_{\text{$(d-1)$ cells $f \in \partial c$}} X_{f}$.
The dual of $\prod_{\text{$(d-1)$ cells $f \in \partial c$}} X_{f}$ is $X_c$.
We compute $\Udis X_c \Udis$ acting on some configuration of down spins $C$; 
we write the corresponding state $|C\rangle$.
We have $\Udis X_c \Udis=\pm |C'\rangle$, where $C'$ is obtained from $C$ by flipping the spin at cell $c$.  
Assume without loss of generality that the spin at cell $c$ is down in $C$.
We have $\chi(C')-\chi(C)=\chi(c)-\chi(c \cap C)$.
We have $\chi(c)=1$ since $c$ is a $d$-ball.
$c \cap C$ is a union of $(d-1)$-cells $f\in \partial c$ such that each $f$ is attached to two $d$-cells: 
one $d$-cell is $c$ and the other $d$-cell is some cell in $C$ that we denote $n(f)$.
Informally, the duality says that each such $f$ is a boundary between a down and up spin in the GDS dual,
and so it represents a down spin (``present") in the GDS.
Formally, the operator $Z_c Z_{n(f)}$ is dual to $Z_f$.
Hence, $(-1)^{\chi(c \cap C)}$ is dual to 
$(-1)^{\chi\left(\downarrow_c\right)}$.
Since $(-1)^{\chi(c)}=-1$, we have
$\Udis X_c \Udis$ dual to $O_c$.
\end{proof}
\end{lemma}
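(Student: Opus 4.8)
The plan is to verify the claimed operator identity directly on the computational basis of the dual model, whose states $|C\rangle$ are labelled by the subset $C$ of $d$-cells carrying a down spin, and to compare the two sides spin-by-spin. Since $\Udis$ is diagonal with eigenvalue $(-1)^{\chi(\downarrow_C)}$ (where $\downarrow_C$ is the union of the closed down $d$-cells of $C$) and $X_c$ merely flips the occupation of the cell $c$, the right-hand side gives $\Udis X_c \Udis\,|C\rangle = (-1)^{\chi(\downarrow_{C'})-\chi(\downarrow_C)}\,|C'\rangle$, with $C'$ obtained from $C$ by toggling $c$. On the left-hand side, the second line of the duality map \eqref{dualityeq} sends $\prod_{f\in\partial c}X_f$ to $X_c$, producing the same flip $|C\rangle\mapsto|C'\rangle$; so the entire content of the lemma is the matching of the two scalar prefactors, i.e.\ that the dual of the sign $-(-1)^{\chi(\downarrow_c)}$ appearing in \eqref{Ocdef} acts as $(-1)^{\chi(\downarrow_{C'})-\chi(\downarrow_C)}$ on $|C\rangle$.

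First I would reduce the Euler-characteristic difference by additivity. Taking $c\in C$ without loss of generality, $\downarrow_C$ is the union of $\downarrow_{C'}$ with the closed cell $\bar c$, so $\chi(\downarrow_C)=\chi(\downarrow_{C'})+\chi(\bar c)-\chi(\bar c\cap\downarrow_{C'})$. Because $\bar c$ is a $d$-ball, $\chi(\bar c)=1$, and it is precisely this term that will produce the overall minus sign on the disentangler side (mirroring the explicit $-$ in the definition of $O_c$). The intersection $R:=\bar c\cap\downarrow_{C'}$ is the union of those boundary faces $f\in\partial c$ whose \emph{second} incident $d$-cell $n(f)$ also lies in $C$, together with their lower faces. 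Working modulo $2$, where the sign of the exponent is irrelevant, the right-hand prefactor is therefore $-(-1)^{\chi(R)}$.

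The hard part, I expect, is identifying $(-1)^{\chi(R)}$ with the dual of $(-1)^{\chi(\downarrow_c)}$. Under the duality a face $f$ is ``present'' exactly when its two incident $d$-cells carry opposite spins, i.e.\ $Z_cZ_{n(f)}=-1$; hence on $|C\rangle$ the present region $\downarrow_c\subset\partial c$ consists of the faces with $n(f)$ of opposite spin to $c$, so $\downarrow_c$ and $R$ are \emph{complementary} subcomplexes of the sphere $\partial c\cong S^{d-1}$ (here the combinatorial-manifold hypothesis, guaranteeing that the link of $c$ is a sphere, is essential). I would then show $\chi(R)\equiv\chi(\downarrow_c)\pmod 2$ by observing that the two regions meet along a closed $(d-2)$-dimensional interface $\Sigma=\partial R=\partial(\downarrow_c)$ embedded in $S^{d-1}$ (the generic Voronoi cellulation is chosen exactly so that $\Sigma$ is an embedded submanifold with no self-intersection). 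Since $\Sigma$ bounds the $(d-1)$-manifold-with-boundary $R$ it is null-cobordant, so all its Stiefel--Whitney numbers and in particular $\chi(\Sigma)\bmod 2$ vanish; additivity $\chi(R)+\chi(\downarrow_c)=\chi(S^{d-1})+\chi(\Sigma)$ together with $\chi(S^{d-1})=1+(-1)^{d-1}\equiv 0$ then forces $\chi(R)\equiv\chi(\downarrow_c)\pmod 2$. This is, incidentally, the same parity fact that makes the state-dependent sign in $O_c$ Hermitian in the first place. Assembling the pieces, the dual prefactor is $-(-1)^{\chi(\downarrow_c)}$, matching $O_c$. As an alternative to the basis-state computation, one could conjugate $X_c$ directly through the $C^kZ$ decomposition of $\Udis$: each gate $C^{\dim\sigma}Z$ on a simplex $\sigma\ni c$ becomes $C^{\dim\sigma-1}Z$ on the link simplex $\sigma\setminus c$, so $\Udis X_c\Udis=X_c\,(-1)^{\chi(K)}$ with $K$ the down-spin subcomplex of the link sphere of $c$; this reduces to the same sphere-level parity identity and is perhaps the more transparent route.
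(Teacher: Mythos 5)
Your proposal is correct, and its skeleton is the same as the paper's: evaluate $\Udis X_c \Udis$ on a basis state $|C\rangle$, use additivity of the Euler characteristic together with $\chi(\bar c)=1$ to reduce the prefactor to $-(-1)^{\chi(R)}$ with $R=\bar c\cap \downarrow_{C'}$, and use the duality $Z_c Z_{n(f)} \leftrightarrow Z_f$ to translate the result into GDS data. Where you genuinely add something is the final identification step. The paper identifies $R$ directly with $\downarrow_c$; this is literally correct only if the sign in \eqref{Ocdef} is read off the configuration \emph{after} the flip (in $C'$, where $c$ is up, the domain walls on $\partial c$ are exactly the faces with $n(f)$ down, i.e.\ $R$), whereas the definition states that $\downarrow_c$ is evaluated on the state $H_c$ acts on. The paper elides this input/output convention issue. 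You keep the input-state convention, observe that then $\downarrow_c$ and $R$ are \emph{complementary} codimension-zero regions of $\partial c \cong S^{d-1}$, and prove the needed parity fact $\chi(R)\equiv\chi(\downarrow_c)\pmod 2$ by inclusion-exclusion, $\chi(S^{d-1})$ even, and evenness of $\chi(\Sigma)$ for the common boundary $\Sigma$ (null-cobordant, hence even Euler characteristic). That parity fact is precisely what makes the two sign conventions agree --- equivalently, what makes $O_c$ Hermitian, as you note --- and it is the same ``a boundary has even Euler characteristic'' fact the paper itself invokes in \cref{lem:chiflip}. So your argument is a strictly more careful version of the paper's proof rather than a different one. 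Your sketched alternative, conjugating $X_c$ through the $C^kZ$ decomposition of $\Udis$ to pick up $(-1)^{\chi(K)}$ on the down-spin subcomplex of the link of $c$, is also sound and is essentially the ``elementary proof'' mechanism of \cref{lem:chiflip} localized to a single vertex; as you say, it bottoms out in the same sphere-level parity identity.
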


Also we consider the commutation of $X$ with $\Udis$, showing that they commute up to a sign determined by the Euler characteristic of the Voronoi cellulation $\cellfixed$:
\begin{lemma}\label{lem:chiflip}
Let $X$ be the global spin flip.  Then,
\begin{align}
X \Udis X \Udis = (-1)^{\chi(\cellfixed)}.
\end{align}
\begin{proof}
Let us first give a proof using properties of a combinatorial manifold before giving an alternative elementary proof.
We have $X \Udis X \Udis=(-1)^{\chi(\uparrow_\cellfixed)} (-1)^{\chi(\downarrow_\cellfixed)}$, 
where $\uparrow_\cellfixed$ consists of all $d$-cells labelled $\uparrow$.
By the additivity formula for the Euler characteristic,
$(-1)^{\chi(\uparrow_\cellfixed)} (-1)^{\chi(\downarrow_\cellfixed)}=(-1)^{\chi(\cellfixed)} (-1)^{\chi(\partial \downarrow_\cellfixed)}$.
If $\downarrow_\cellfixed$ is a manifold, then
since $\partial \downarrow_\cellfixed$ is a boundary, it has even Euler characteristic.

We now give the elementary proof.  
For any $C^k Z$ acting on any $k$-simplex $\Delta$, 
the conjugation $X (C^k Z) X$ is equal to $-1$ times 
the product of $C^j Z$ over all simplices of dimension $j=0,1,\ldots,k$ in $\Delta$.
Since $X \left[ \prod (C^k Z) \right] X = \prod \left[ X (C^k Z)X \right]$,
this in particular implies that 
$X \left[ \prod (C^k Z) \right] X \left[ \prod (C^k Z) \right] 
= \prod \left[ X (C^k Z)X (C^k Z) \right]$
as diagonal operators commute.
Then, we see in the group commutator $X \Udis X \Udis$,
taking the product of $C^k Z$ over all simplices in $\cellfixed$, 
on any given $j$-simplex the terms $C^j Z$ will cancel
if and only if that $j$-simplex is a face of an even number of higher dimensional simplices.
The number of such higher dimensional simplices is equal modulo $2$ 
to the Euler characteristic of the link, and so vanishes modulo $2$ 
for a combinatorial manifold.
The factors of $-1$ for each $\Delta$ give a factor of $(-1)^{\chi(\cellfixed)}.$
%Remark: in this case, we can express the group commutator of $X$ with the product $C^k Z$ as a product of group commutators, since each group commutator $X (C^k Z) X (C^k Z)$ commutes with all other group commutators and with all other terms in the product.
\end{proof}
\end{lemma}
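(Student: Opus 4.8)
The plan is to exploit the fact that $\Udis$ is diagonal, so that global conjugation by the spin flip $X$ simply exchanges the roles of $\uparrow$ and $\downarrow$. Acting on a configuration whose set of down $d$-cells is $\downarrow_\cellfixed$, the disentangler contributes the eigenvalue $(-1)^{\chi(\downarrow_\cellfixed)}$, while $X \Udis X$ contributes $(-1)^{\chi(\uparrow_\cellfixed)}$, where $\uparrow_\cellfixed$ is the complementary set of up cells. Hence on every configuration
\begin{align}
X \Udis X \Udis = (-1)^{\chi(\uparrow_\cellfixed) + \chi(\downarrow_\cellfixed)},
\end{align}
and it suffices to show $\chi(\uparrow_\cellfixed) + \chi(\downarrow_\cellfixed) \equiv \chi(\cellfixed) \pmod 2$ independently of the configuration.

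First I would apply the inclusion--exclusion (additivity) formula for the Euler characteristic to the cover $\cellfixed = \uparrow_\cellfixed \cup \downarrow_\cellfixed$, whose overlap is the shared wall $\uparrow_\cellfixed \cap \downarrow_\cellfixed = \partial \downarrow_\cellfixed$. This gives $\chi(\uparrow_\cellfixed) + \chi(\downarrow_\cellfixed) = \chi(\cellfixed) + \chi(\partial \downarrow_\cellfixed)$, so the statement reduces to the claim that the $(d-1)$-dimensional wall $\partial \downarrow_\cellfixed$ has even Euler characteristic. When $\downarrow_\cellfixed$ is an embedded submanifold-with-boundary --- which is exactly why the cellulation was chosen generic --- this is immediate, since the boundary of a compact manifold always has even Euler characteristic.

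The main obstacle I anticipate is that for a completely arbitrary spin configuration the wall $\partial \downarrow_\cellfixed$ need not be a genuine manifold, so the clean ``boundary has even $\chi$'' argument is not obviously available for every term on which the operator identity must hold. To obtain an argument that is manifestly configuration-independent and does not presuppose the manifold property, I would instead use the explicit local-gate factorization $\Udis = \prod_\Delta C^{\dim \Delta} Z$ over all simplices $\Delta$ of the Delaunay triangulation. The core computation is a single-simplex identity: writing the gate on $\Delta$ as $C^{\dim\Delta} Z_\Delta = (-1)^{\prod_{v\in\Delta} n_v}$ (with $n_v=1$ for $\downarrow$) and sending $n_v \mapsto 1 - n_v$ under $X$, one expands $\prod_{v}(1-n_v) + \prod_v n_v$ over $\FF_2$ and finds
\begin{align}
X (C^{\dim\Delta} Z_\Delta) X (C^{\dim\Delta} Z_\Delta) = - \prod_{\emptyset \ne \tau \subsetneq \Delta} C^{\dim \tau} Z_\tau,
\end{align}
i.e. the commutator of $X$ with the gate on $\Delta$ is $-1$ times the product of the gates on all proper faces of $\Delta$.

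Finally I would take the product of this identity over all simplices $\Delta$. The sign $-1$ appears once per simplex, contributing $(-1)^{(\#\,\text{simplices})} = (-1)^{\chi(\cellfixed)}$ modulo two, since $\chi \equiv \sum_k f_k \bmod 2$, with $f_k$ the number of $k$-simplices. Each face gate $C^{\dim\tau} Z_\tau$ appears once for every simplex that contains $\tau$ as a proper face; because these gates square to the identity, only the parity of the number of such cofaces matters. That number equals the number of nonempty simplices in the link of $\tau$, which is congruent mod two to $\chi(\mathrm{lk}\,\tau)$. The decisive input is then that in a combinatorial manifold every link is a sphere, and the Euler characteristic of a sphere is $0$ or $2$, hence always even; so every face gate cancels and only the overall sign $(-1)^{\chi(\cellfixed)}$ survives. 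The step I expect to be most delicate is this last one --- pinning down that the relevant coface count is controlled by $\chi$ of the link and invoking the combinatorial-manifold (spherical-link) hypothesis to force it to be even --- which is precisely where the assumption that the Delaunay triangulation is a combinatorial manifold does the essential work.
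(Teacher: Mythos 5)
Your proposal is correct and follows the paper's own treatment essentially verbatim, in both halves: your first argument is the paper's additivity-of-Euler-characteristic proof (including the same caveat that $\downarrow_\cellfixed$ must be a manifold, which the generic cellulation ensures), and your second is the paper's elementary proof via the single-simplex commutator identity $X (C^k Z) X (C^k Z) = -\prod_{\tau \subsetneq \Delta} C^{\dim \tau} Z_\tau$ together with the fact that in a combinatorial manifold each simplex has an even number of proper cofaces because that count is congruent mod $2$ to the Euler characteristic of its (spherical) link. No gaps; nothing further is needed.
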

Hence, if $\chi(\cellfixed)$ is even (odd), the ground state of the GDS dual is even (odd) under $X$.
It follows then that $O_c$ commutes with $X$; of course, this was to be expected from the properties of the duality: every operator in ${\cal A}$ commutes with $X$.

\subsection{$G$-Equivalence, Stabilization, and Duality}
\label{equivalencesection}
In this paper, we will consider {\it stable} equivalence.  Let us define this first for a dual model, i.e. a model invariant under some group $G$.  First we need to define the notion of equivalence under local quantum circuits.  Here we will generalize to an arbitrary group $G$; in this case we will define a ``computational basis", where the basis states of each qudit are labelled by group elements and the action of the group in this basis is by group multiplication.

A {\bf $G$-invariant circuit} is a local quantum circuit whose gates are invariant under $G$.
One can define a local quantum circuit formally if one wishes by considering families of unitaries and requiring that the unitaries in the circuit have depth and range which are both $O(1)$.
The dual of a $G$-invariant circuit is a circuit which leaves $H_+$ invariant.  We say that
two unitaries $U,V$ are  {\bf $G$-circuit equivalent} if the unitary $U^\dagger V$ can be realized by a $G$-invariant circuit.

We say that
two unitaries $U,V$ are {\bf stably $G$-equivalent} if the unitaries $(U\otimes I)$ and  $(V \otimes I)$ are $G$-circuit equivalent, where $U\otimes I$ denotes $U$ tensored with the identity matrix on some number of additional qudits, and $V \otimes I$ denotes $V$ tensored with the identity matrix on some possibly different number of additional qudits.

When tensoring with additional qudits for stable equivalence, we will consider some refinement $\cellfixed'$ of the original cellulation $\cellfixed$.   Each $d$-cell $\Delta$ of $\cellfixed$ corresponds to some qudit in the first tensor factor (i.e., in the factor acted on by $U$ or $V$).  There are one or more $d$-cells in $\cellfixed'$ contained in $\Delta$; the added $d$-cells correspond to the added qudits and we (arbitrarily) choose one of these $d$-cells to correspond to a qudit in the first tensor factor.

We can also define stable $G$-equivalence of states, saying that
two pure states, $\psi_1,\psi_2$, are {\bf $G$-equivalent} if $\psi_1=U \psi_2$
for some unitary $U$ which is a $G$-invariant circuit.
Two pure states, $\psi_1,\psi_2$, are  stably $G$-equivalent if $\psi_1 \otimes |+\rangle^{n_1}$ is $G$-equivalent to
 $\psi_2 \otimes |+\rangle^{n_2}$,
where $|+\rangle^{n_1}$ and $|+\rangle^{n_2}$ denote 
some number of additional qudits in a product state which is invariant under the symmetry group~$G$.
For $G=\ZZ_2$, the state $|+\rangle$ is simply the usual $+1$ eigenstate of Pauli $X$.

Note that while a toric code is dual to a model with $\ZZ_2$ symmetry, two copies of the toric code  is dual to a model with the symmetry $\ZZ_2 \times \ZZ_2$ since one has a symmetry in each copy.  However, one copy of the toric code on a bilayer is dual to a bilayer with $\ZZ_2$ symmetry.

If the state $\psi$ is a zero energy ground state of some $G$-invariant Hamiltonian $H_{dual}$ which is a sum of commuting projectors, then $\psi\otimes |+\rangle$ is also the ground state of some $G$-invariant Hamiltonian $H^+_{dual}$, as one can add a $G$-invariant term to force the added qudit into the $|+\rangle$ state.  There is an obvious generalization to the case that one adds several qudits.
A more natural way of defining a $G$-invariant Hamiltonian with added qudits is to ``copy" the states of the qudit when refining the cell structure, so that all qudits corresponding to cells in the refinement ``have the same state as the qudit in the original complex".  Formally, for each $d$-cell $\Delta$ in $\cellfixed$, define an isometry $V_\Delta$ as follows: let $\Delta'_1,\ldots,\Delta'_j$ for some integer $j$ be the $d$-cells contained in $\Delta$ in some refinement $\cellfixed'$
and let $V_\Delta$ be the isometry
$V_\Delta=\sum_{g\in G} |g,\ldots,g\rangle \langle g|,$ where the bra is the state of the qudit on $\Delta$ and the ket is the states of the qudits corresponding to $\Delta'_1,\ldots,\Delta'_j$.  Let $V=\prod_\Delta V_\Delta$ so that the domain of $V$ has one qudit per cell in the original complex and the codomain has one qudit per cell in the refinement.
Let $H'_{dual}=V H_{dual} V^\dagger + \sum_{\Delta} (1-V^\dagger_\Delta V_\Delta).$  The second term adds a penalty unless all qudits in a refinement of a given cell $\Delta$ are in the same state.  We leave it to the reader to show that $H^+_{dual}$ and $H'_{dual}$ are related by local quantum circuits.

Suppose $H_{dual}$ is dual to some Hamiltonian $H_{gauge}$ and $H'_{dual}$ is dual to $H'_{gauge}$.  
Here, the added qudit will represent some added $d$-cell in a refinement of the original cellulation, and so $H'_{gauge}$ may have some number of added $(d-1)$-cells.
Then, given any zero energy ground state of $H_{dual}$, one can construct a zero energy ground state of $H'_{gauge}$ by tensoring in additional degrees of freedom on the $(d-1)$-cells and then acting with a local quantum circuit.  In this case, all $(d-1)$-cells bounding a pair of cells $\Delta'_1,\Delta'_2\in \cellfixed'$ which are both contained in the same cell $\Delta\in\cellfixed$ will be in the empty state; this is enforced by the term dual to $\sum_{\Delta} (1-V^\dagger_\Delta V_\Delta)$.
For any pair of different cells $\Delta_1,\Delta_2 \in \cellfixed$, all $(d-1)$-cells in the refinement which are in the boundary of both $\Delta_1,\Delta_2$ will be in the same state as each other.

\section{Cohomology Models}
\label{cohomologysection}
Here we review a construction for symmetry-protected topological (SPT) states based on group cohomology~\cite{DijkgraafWitten,Wen}.
Although the underlying math is virtually identical,
our exhibition here differs from existing ones
in that we consider cohomology states on subsystems defined by any homological cycles
whereas in usual treatment, Ref.~\cite{Wen} in particular,
authors have considered SPT states on the (top dimensional) fundamental homology class only.
This extension enables us to address cohomology states on physical spaces of more general topology.
We complement the construction of cohomology states 
with an (inverse) entanglement renormalization group transformation in \cref{sec:RG}.

\begin{definition}[Group cochain circuit]\label{def:cochaincircuit}
Given a group cochain $\omega: G^{k+2} \to \RR / \ZZ$
we define a diagonal unitary (quantum gate) $U_\omega : (\CC G)^{\otimes (k+1)} \to (\CC G)^{\otimes (k+1)}$,
called a {\bf cochain gate}, as
\begin{align}
U_\omega \ket{g_0,g_1,\ldots,g_k} = \exp\left(2 \pi i \omega(e,g_0,g_1,\ldots,g_k)\right) \ket{g_0,g_1,\ldots,g_k} 
\end{align}
where $e \in G$ is the group identity element.
For an integral simplicial $k$-chain $C = \sum_j a_j \Delta_j$ with $a_j \in \ZZ$
in a simplicial complex where each $k$-simplex $\Delta_j$ has a fixed ordering of vertices
by which it is oriented,
the {\bf cochain circuit} by $\omega$ on $C$ 
is the product of all the cochain gates $U_{a_j \omega, \Delta_j}$ over $k$-simplices $\Delta_j$,
where the ordering of vertices is used to match the qudits in each simplex 
with the coordinates of the argument of $\omega$.
Since the gates commute with each other the product is unambiguous.
\end{definition}

Recall that the coboundary $\delta$ on the group cochain complex is defined as
\begin{align}
(\delta \omega)(g_0,\ldots,g_{k+2}) = \sum_{j=0}^{k+2} (-1)^j \omega(g_0, \ldots, \hat g_j , \ldots, g_{k+2})
\end{align}
where $\hat g_j$ means to omit $g_j$.
Let $\bar g$ for $g \in G$ be a symmetry operator on the full physical system.
If $G$ is an internal on-site symmetry, the action of $G$ on $\RR/\ZZ$ is trivial and
$\bar g$ is the tensor product of unitaries $\ket h \mapsto \ket{gh}$ over all degrees of freedom.
If $G$ is an antiunitary symmetry, $G$ acts nontrivially on $\RR/\ZZ$ 
and $\bar g$ acts on the phase $\eta \in \RR/\ZZ$ as $e^{2\pi i \eta} \mapsto e^{2\pi i g \eta}$.%
\footnote{
Strictly following this construction, 
a basis state $\ket g$ of a local degree of freedom must be mapped to an orthogonal basis state under any nonidentity symmetry action.
This means that the time reversal symmetry
should be represented as the global spin flip followed by complex conjugation, 
which corresponds to the diagonal subgroup of our $\ZZ_2 \times \ZZ_2^T$ later.
However, our time reversal symmetry is just the complex conjugation.
This should not cause any confusion as our explicit states are always $\ZZ_2 \times \ZZ_2^T$ symmetric.
}
A $(k+1)$-cochain $\omega$ is {\bf homogeneous} 
if $g \omega(\vec x) = \omega(g \vec x)$ for any $\vec x \in G^{k+2}$ and $g \in G$.

\begin{lemma}[Cycle-commutativity]\label{lem:cyclecomm}
Let $\omega:G^{k+2} \to \RR/\ZZ$ be a nonzero homogeneous cocycle.
Let $U$ be a cochain circuit by $\omega$ on a simplicial $k$-chain $C$.
Then, the commutator $\bar g U \bar g^{-1} U^\dagger$ is equal to a cochain circuit 
by $\omega(g,\cdot)$ on $\partial C$.
In particular, for a cocycle $\omega$ valued in the cyclic subgroup $\ZZ/n$ of $\RR/\ZZ$,
the commutator vanishes if and only if $\partial C = 0 \bmod n$.
Here, unlike usual simplicial homology, 
the boundary of a $0$-chain $\sum_j a_j \Delta^0_j$ is $\sum_j a_j$ (remark: this is sometimes called reduced homology).
\end{lemma}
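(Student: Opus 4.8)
The plan is to exploit the fact that both $U$ and its conjugate $\bar g U \bar g^{-1}$ are products of commuting diagonal gates indexed by the oriented $k$-simplices of $C$, with the chain coefficients $a_j$ entering linearly in the exponent. It therefore suffices to fix a single oriented $k$-simplex $\Delta = (h_0,\ldots,h_k)$, compute the eigenvalue of $\bar g U_{\omega,\Delta}\bar g^{-1} U_{\omega,\Delta}^\dagger$ on a computational basis state, and then take the product over all simplices of $C$ at the end. All operators in sight are diagonal in the group basis, so the whole computation reduces to tracking a single $\RR/\ZZ$-valued phase per simplex.

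First I would compute the conjugated gate. Acting with $\bar g^{-1}$ relabels each local group element $h_i \mapsto g^{-1} h_i$, and in the antiunitary case $\bar g$ additionally sends the accumulated phase $\eta \mapsto g\eta$. The key input is homogeneity: writing $(e, g^{-1}h_0, \ldots, g^{-1}h_k) = g^{-1}(g, h_0, \ldots, h_k)$ and applying $g\omega(\vec x) = \omega(g \vec x)$ collapses both the unitary and the antiunitary cases to the single clean statement that $\bar g U_{\omega,\Delta}\bar g^{-1}$ has eigenvalue $\exp(2\pi i\,\omega(g, h_0, \ldots, h_k))$. I would emphasize that homogeneity is exactly what is needed to make the sign coming from complex conjugation cancel against the sign coming from the relabeling. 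Consequently the commutator $\bar g U_{\omega,\Delta}\bar g^{-1} U_{\omega,\Delta}^\dagger$ has eigenvalue $\exp\bigl(2\pi i[\omega(g, h_0, \ldots, h_k) - \omega(e, h_0, \ldots, h_k)]\bigr)$.

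Next I would invoke the cocycle condition. Evaluating $\delta\omega = 0$ on the $(k+3)$-tuple $(g, e, h_0, \ldots, h_k)$ gives
\begin{align}
\omega(e, h_0, \ldots, h_k) - \omega(g, h_0, \ldots, h_k) + \sum_{m=0}^{k} (-1)^{m} \omega(g, e, h_0, \ldots, \hat h_m, \ldots, h_k) = 0,
\end{align}
so the commutator phase equals $\sum_{m=0}^{k} (-1)^m \omega(g, e, h_0, \ldots, \hat h_m, \ldots, h_k)$. This is precisely the phase produced by the cochain circuit of the $k$-cochain $\omega(g, \cdot)$ acting on $\partial\Delta = \sum_m (-1)^m (h_0, \ldots, \hat h_m, \ldots, h_k)$, since the $m$-th face carries orientation sign $(-1)^m$ and its gate contributes $\exp(2\pi i\,\omega(g, e, \ldots))$. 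Taking the product over all simplices $a_j\Delta_j$ of $C$ and using linearity of the boundary map then yields the full statement: the commutator is the cochain circuit by $\omega(g,\cdot)$ on $\partial C$.

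For the final ``in particular'', the direction $\partial C = 0 \bmod n \Rightarrow$ trivial commutator is immediate: when $\omega$ is valued in $\tfrac1n\ZZ/\ZZ$, every gate $U_{b\,\omega(g,\cdot),\sigma}$ with $n \mid b$ contributes a phase in $\ZZ$, hence the identity. The hard part will be the converse, that a boundary which is nonzero mod $n$ forces a genuinely nontrivial operator. Since the gates of the circuit on $\partial C$ live on overlapping vertex sets, one cannot simply read off nontriviality gate-by-gate; I would instead isolate a $(k-1)$-simplex $\sigma$ whose coefficient is nonzero mod $n$, pin all vertices outside $\sigma$ to the identity, and take a full finite difference over the $k$ vertices of $\sigma$ so that only $\sigma$ survives — any other $(k-1)$-simplex misses at least one vertex of $\sigma$ and is annihilated by the difference. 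The remaining obstacle is to show that this fully reduced value of $\omega(g, e, \cdot)$ realizes a nonzero element of $\ZZ/n$ for some $g$ and some configuration; this is where one must genuinely use that $\omega$ is a nonzero homogeneous cocycle, and it is the one step I do not expect to be purely formal.
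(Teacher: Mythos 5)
Your proof of the first claim is correct and takes essentially the same route as the paper's: reduce to a single gate (all gates involved are diagonal, hence commute), use homogeneity to convert $g\,\omega(e,g^{-1}\vec x)$ into $\omega(g,\vec x)$ --- and you are right that this is exactly the point where the unitary and antiunitary cases collapse into one --- then apply the cocycle identity $(\delta\omega)(g,e,\vec x)=0$ to rewrite $\omega(g,\vec x)-\omega(e,\vec x)$ as the alternating sum over faces, which is the circuit by $\omega(g,\cdot)$ on $\partial\Delta$, and finish by linearity in the chain coefficients.

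The divergence is in the ``in particular.'' The paper's proof dismisses it with ``the second claim follows obviously from the first,'' whereas you prove the easy (``if'') direction and explicitly flag the converse as a non-formal step you cannot complete. Your caution is vindicated in the strongest possible sense: the converse is \emph{false} for a general nonzero homogeneous cocycle, so the step you left open cannot be filled as stated. Concretely, take $k$ odd and $\omega\equiv\tfrac12$ the constant cochain: it is homogeneous, valued in $\ZZ/2$, and a cocycle because $\delta\omega$ is an alternating sum of $k+3$ (an even number of) equal terms. Then the circuit by $\omega$ on any $k$-chain $C=\sum_a c_a\Delta_a$ is the global scalar $(-1)^{\sum_a c_a}$, so the commutator vanishes for \emph{every} $C$, including a single $k$-simplex, for which $\partial C\neq 0\bmod 2$. (Consistently with the first claim: the circuit by $\omega(g,\cdot)\equiv\tfrac12$ on $\partial C$ multiplies the phases $(-1)^{b_\sigma}$ over the coefficients $b_\sigma$ of $\partial C$, and these coefficients sum to zero for odd $k$.) So ``nonzero'' must be strengthened, e.g., to the specific representative $\omega_k$ of \cref{eq:omegak}, which is the only cocycle for which the paper ever invokes the converse (in \cref{lem:allcohomology}). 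For $\omega_k$ and $n=2$ your pin-and-finite-difference plan does close the gap: writing phases as multilinear polynomials in the bits, the commutator gate on a $(k-1)$-simplex $\sigma$ with coefficient $b_\sigma$ is $\tfrac12\,b_\sigma\, x_0(1-x_1)x_2\cdots$, whose unique top-degree monomial is $\pm\tfrac12\,b_\sigma\prod_{v\in\sigma}x_v$; the full finite difference over the vertices of a chosen $\sigma_0$ annihilates every gate on $\sigma\neq\sigma_0$ (any such $\sigma$ misses at least one vertex of $\sigma_0$), and what survives is $\pm\tfrac12 b_{\sigma_0}\neq 0\bmod 1$ whenever $b_{\sigma_0}$ is odd. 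With the hypothesis tightened in this way, your argument is complete and is in fact more careful than the paper's own treatment of this point.
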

\begin{proof}
Since $U_{\omega,\Delta}$ and $\bar g U_{\omega,\Delta} \bar g^{-1}$ commute because they are diagonal unitaries,
the overall commutator is the product of all (local) commutators between cochain gates and $\bar g$.
Hence, to prove the first claim it suffices by linearity 
to verify it for a single cochain gate $U_{\omega,\Delta}$.
The local commutator is a diagonal gate $\exp[2\pi i (g \omega(e,g^{-1}\vec x) - \omega(e,\vec x))]$.
Here the phase is equal to $\eta = \omega(g,\vec x) - \omega(e,\vec x)$ because $\omega$ is homogeneous.
The cocycle condition
$(\delta \omega)(g,e,\vec x) = 0$ implies that the phase is equal to
\begin{align}
\eta = \omega(g,\vec x) - \omega(e,\vec x) = \sum_{j=0}^k (-1)^j \omega(g,e,x_0,\ldots, \hat x_j, \ldots,x_k).
\end{align}
But the simplex $\Delta = (v_0,\ldots,v_k)$ spanned by vertices $v_0,\ldots,v_k$ has the boundary chain
$\partial \Delta = \sum_j (-1)^j (v_0,\ldots, \hat v_j, \ldots,v_k)$.
Hence, the phase $\eta$ is equal to the sum of phases from the cochain circuit by $\omega(g,\cdot)$ 
on the simplicial $(k-1)$-chain $\partial \Delta$. 

The second claim follows obviously from the first.
\end{proof}

\begin{lemma}[Coboundary circuit is locally symmetric]\label{lem:coboudarylocalsymmetric}
A $G$-cochain circuit by a homogeneous coboundary $\omega = \delta \lambda$ on a simplicial cycle $C$ 
is equal to a product of gates, each of which is $G$-symmetric.
\end{lemma}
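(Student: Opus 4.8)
The plan is to exploit the coboundary structure to \emph{localize} the symmetry. \cref{lem:cyclecomm} already shows that the full circuit $U$ commutes with each $\bar g$ (a coboundary is a cocycle, and $\partial C = 0$), but this is weaker than the claimed statement that $U$ is a \emph{product} of individually $G$-symmetric gates. To get the stronger statement, I would choose a homogeneous $k$-cochain $\lambda:G^{k+1}\to\RR/\ZZ$ with $\omega = \delta\lambda$ and expand the phase of each cochain gate. Writing $\vec g_\Delta = (g_{v_0},\ldots,g_{v_k})$ for the vertex states of an oriented $k$-simplex $\Delta = (v_0,\ldots,v_k)$ of $C = \sum_\Delta a_\Delta \Delta$, the coboundary formula applied to $(e,\vec g_\Delta)$ gives
\begin{align}
\omega(e,\vec g_\Delta) = \lambda(\vec g_\Delta) + \sum_{j=0}^{k}(-1)^{j+1}\lambda(e,\partial_j \vec g_\Delta),
\end{align}
where $\partial_j \vec g_\Delta$ is the tuple with $g_{v_j}$ omitted. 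The first term uses a dynamical group element in \emph{every} slot; the remaining terms carry the fixed identity $e$ in the first slot and live on the $(k-1)$-faces of $\Delta$.

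The first terms define a candidate circuit $\tilde U = \prod_\Delta V_{a_\Delta\lambda,\Delta}$, where $V_{a_\Delta\lambda,\Delta}\ket{\vec g} = \exp(2\pi i\, a_\Delta\lambda(\vec g))\ket{\vec g}$. I would check that each $V_{a_\Delta\lambda,\Delta}$ is $G$-symmetric by repeating the local computation in the proof of \cref{lem:cyclecomm} verbatim, except that the fixed slot $\omega(e,\cdot)$ is replaced by the fully dynamical $\lambda(\cdot)$: the local commutator phase is $a_\Delta\,[\,g\lambda(g^{-1}\vec x) - \lambda(\vec x)\,]$, which vanishes because homogeneity gives $g\lambda(g^{-1}\vec x) = \lambda(\vec x)$ (this is precisely where the fixed $e$ in the original gate obstructed symmetry). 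Thus $\tilde U$ is manifestly a product of $G$-symmetric gates supported on the $k$-simplices of $C$.

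It then remains to show $U = \tilde U$, i.e.\ that the face terms contribute nothing. Collecting the second sum over all $\Delta$ weighted by $a_\Delta$, the signed faces assemble into the cochain circuit by $\lambda$ (in the $e$-convention, on $(k-1)$-simplices) evaluated on $\sum_\Delta a_\Delta\,\partial\Delta = \partial C$. Since $C$ is a cycle, $\partial C = 0$ as an integral chain, so every $(k-1)$-simplex receives coefficient zero and this circuit is the identity. Hence the total phase equals $\sum_\Delta a_\Delta\lambda(\vec g_\Delta)$, giving $U = \tilde U$ and proving the lemma.

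The step requiring the most care is the orientation and sign bookkeeping in the last paragraph: one must ensure that the induced ordering of each face coming from $\Delta$ matches the fixed ordering of that face as a simplex of the complex, so that the $(-1)^{j}$ signs really do reproduce the simplicial boundary operator and allow the integral cancellation $\partial C = 0$ to take effect. I would arrange this by fixing a global total order on the vertices and orienting every simplex by increasing vertex order, so that each face $\partial_j\Delta$ is already correctly ordered and no additional permutation signs appear. I should also note at the outset that we take $\lambda$ to be homogeneous, which is possible since a homogeneous coboundary is by definition $\delta$ of a homogeneous cochain; homogeneity of $\lambda$ is exactly what makes the bulk gates symmetric.
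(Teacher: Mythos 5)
Your proof is correct and follows essentially the same route as the paper's: expand $\omega(e,\vec x)=(\delta\lambda)(e,\vec x)$ into the homogeneous term $\lambda(\vec x)$, whose gates are symmetric by homogeneity, plus face terms that assemble into a cochain circuit by $\lambda$ on $\partial C=0$ and hence cancel. Your added care about vertex orderings (using a global total order so that face orientations match) and about choosing $\lambda$ homogeneous is implicit in the paper's conventions, but it is the right bookkeeping and does not change the argument.
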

\begin{proof}
The coboundary operation gives 
\begin{align}
\omega_k(e,\vec x) = \lambda(\vec x)- \underbrace{\sum_{j=0}^{k} (-1)^j \lambda(e,x_0,\ldots,\hat x_j,\ldots,x_k)}_{\lambda'(\vec x)}.
\end{align}
The gate given by the first term $\exp(2\pi i \lambda(\vec x))$ is manifestly $G$-symmetric since $\lambda$ is homogeneous.
The gate given by the second term $\lambda'$ on a simplex $\Delta$ is precisely the cochain gate by $\lambda'$ on $\partial \Delta$.
Hence, if $C = \sum_a c_a \Delta_a$ is the simplicial cycle,
the product $\prod_{a} \left(e^{2\pi i c_a\lambda'(\vec x)}\right)_{\Delta_a}$ is the identity because $C$ is closed.
\end{proof}

\begin{lemma}[Cocycle circuits on homologous chains are equivalent]\label{lem:moveWithinHomologyclass}
A $G$-cochain circuit by a $k$-cocycle $\omega$ on a $k$-chain $\partial C$ that is a boundary 
is equal to a product of gates, each of which is $G$-symmetric and is supported on a $(k+1)$-simplex.
\end{lemma}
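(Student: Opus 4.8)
The plan is to prove the statement constructively: I will exhibit, for each $(k+1)$-simplex appearing in $C$, an explicit diagonal gate supported on that simplex, verify it is $G$-symmetric, and then check that the product of these gates reproduces the given cochain circuit by $\omega$ on $\partial C$. Write $C = \sum_i c_i \sigma_i$ with each $\sigma_i$ a $(k+1)$-simplex with ordered vertices $(v_0^i, \ldots, v_{k+1}^i)$. The key observation is that a $(k+1)$-simplex has exactly $k+2$ vertices, which is precisely the number of arguments of $\omega : G^{k+2} \to \RR/\ZZ$. So on $\sigma_i$ I define the diagonal gate $W_{\sigma_i}$ acting on its $k+2$ qudits by the phase $\exp(2\pi i\, c_i\, \omega(g_0, \ldots, g_{k+1}))$, where $g_0, \ldots, g_{k+1}$ are the group labels on the ordered vertices. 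The candidate is then $U_C = \prod_i W_{\sigma_i}$, which is manifestly a product of gates each supported on a single $(k+1)$-simplex.

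The first step is to show each $W_{\sigma_i}$ is $G$-symmetric, and this is exactly where homogeneity of $\omega$ (assumed throughout, as in \cref{lem:cyclecomm}) enters. By the same local computation as in the proof of \cref{lem:cyclecomm}, conjugating a diagonal gate with phase $\phi(\vec x)$ by $\bar g$ produces the phase $g\,\phi(g^{-1}\vec x)$; taking $\phi = \omega$ and using homogeneity, $g\,\omega(g^{-1}\vec x) = \omega(\vec x)$, so the gate is unchanged. Hence $W_{\sigma_i}$ commutes with every $\bar g$.

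The second step is to show $U_C$ equals the cochain circuit $U$ by $\omega$ on $\partial C$, and here the cocycle condition is used. Evaluating $\delta\omega$ with the identity inserted in the first slot, $(\delta\omega)(e, g_0, \ldots, g_{k+1}) = 0$, rearranges to
\[
\omega(g_0, \ldots, g_{k+1}) = \sum_{j=0}^{k+1} (-1)^j\, \omega(e, g_0, \ldots, \hat g_j, \ldots, g_{k+1}).
\]
The right-hand side is exactly the orientation-signed sum of the phases of the cochain gates $U_{\omega,\tau_j}$ on the ordered faces $\tau_j = (v_0, \ldots, \hat v_j, \ldots, v_{k+1})$ of $\sigma$, where the signs $(-1)^j$ are those appearing in $\partial \sigma = \sum_j (-1)^j \tau_j$. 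Thus the phase of $W_{\sigma_i}$ matches, term by term, the phase of the cochain circuit by $\omega$ on $c_i\,\partial\sigma_i$. Since all the gates involved are diagonal and hence commute, multiplying over $i$ shows $U_C$ equals the cochain circuit on $\sum_i c_i\,\partial\sigma_i = \partial C$, i.e.\ $U$, completing the argument.

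I expect the main obstacle to be bookkeeping rather than conceptual: one must ensure that the vertex ordering orienting each face $\tau_j$ (and hence which argument slot the omitted label vacates) is consistent with the ordering used to define both the boundary map and the cochain gate $U_{\omega,\tau_j}$, so that the signs from $\delta\omega$ align with those from $\partial$. With this in hand, the homologous-chains statement of the title follows at once: if $C_1$ and $C_2$ are homologous $k$-chains then $C_1 - C_2 = \partial C$, and since cochain circuits are diagonal and additive in the chain, the circuits on $C_1$ and $C_2$ differ precisely by $U_C$, a product of $G$-symmetric $(k+1)$-simplex gates.
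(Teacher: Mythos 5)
Your proof is correct, and it shares the paper's skeleton: decompose the circuit on $\partial C = \sum_i c_i\, \partial \sigma_i$ into local pieces, one per $(k+1)$-simplex, and show each piece is $G$-symmetric. The difference is in how that local symmetry is established. The paper takes the gate on $\Delta_a$ to be the cochain circuit by $\omega$ on $c_a \partial \Delta_a$ itself and cites \cref{lem:cyclecomm}: since $\partial \partial \Delta_a = 0$, the commutator with every $\bar g$ is a circuit on the empty chain, hence trivial. You instead exhibit the single diagonal gate with phase $c_i\,\omega(g_0,\ldots,g_{k+1})$, whose $G$-symmetry is immediate from homogeneity, and then use the cocycle identity $(\delta\omega)(e,g_0,\ldots,g_{k+1})=0$ to show that this gate \emph{equals} the cochain circuit on $c_i\,\partial\sigma_i$ --- that is, the product of face gates collapses to one fully homogeneous gate. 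Both routes consume the same two ingredients (homogeneity and the cocycle condition, which are also exactly what powers the proof of \cref{lem:cyclecomm}), but yours is self-contained and buys an explicit closed form for the symmetric gate, making the symmetry manifest rather than inferred, while the paper's is shorter at the cost of routing through the commutator lemma. Your sign bookkeeping is consistent with the paper's conventions: the $(-1)^j$ arising from $\delta\omega$ with $e$ inserted in the zeroth slot lines up with the $(-1)^j$ in $\partial\sigma = \sum_j (-1)^j \tau_j$, given that coefficients multiply $\omega$ in the definition of the cochain circuit; so the concern you flag is genuinely only bookkeeping, not a gap. The closing reduction from homologous chains to boundaries is also fine, since diagonal cochain circuits are additive in the chain.
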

\begin{proof}
The simplicial chain $C = \sum_a c_a \Delta_a$ is a collection of $(k+1)$-simplices.
The circuit is a product of ``local'' circuits by $\omega$ on $c_a \partial \Delta_a$,
each of which is $G$-symmetric by \cref{lem:cyclecomm} since $\partial \Delta_a$ is closed.
\end{proof}

\begin{lemma}[Cocycle circuits generate invertible states]\label{lem:invertible}
Let $V$ be a $G$-cochain circuit by a cocycle $\omega$ on a simplicial $k$-chain $C$.
Suppose $\omega$ is valued in $\ZZ/n \subset \RR/\ZZ$ and $\partial C = 0 \bmod n$.
Then, the tensor product $V \otimes V^\dagger$ is a product of gates, 
each of which is on $2k+2$ qudits and is $G$-symmetric.
\end{lemma}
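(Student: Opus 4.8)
\emph{Proof proposal.} The plan is to view $V \otimes V^\dagger$ as a single cochain circuit by $\omega$ living on a \emph{doubled} chain, and then to exhibit that chain as a boundary by connecting the two copies with prisms, so that \cref{lem:moveWithinHomologyclass} applies directly. Concretely, write $C = \sum_j a_j \Delta_j$ and let $C^{(0)}, C^{(1)}$ denote the two disjoint copies of $C$ on which $V$ and $V^\dagger$ act, so that $C \sqcup C = C^{(0)} \sqcup C^{(1)}$. Since $V^\dagger$ is the cochain circuit by $-\omega$ on $C^{(1)}$, the operator $V \otimes V^\dagger$ is precisely the cochain circuit by $\omega$ on the $k$-chain $C^{(0)} - C^{(1)}$.

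First I would enlarge the complex by adjoining, for each simplex $\Delta_j = (v_0, \dots, v_k)$, the prism $\Delta_j \times I$ on the $2k+2$ vertices $\{v_0^{(0)}, \dots, v_k^{(0)}, v_0^{(1)}, \dots, v_k^{(1)}\}$, triangulated in the standard staircase fashion into $(k+1)$-simplices and oriented compatibly. Setting $P = \sum_j a_j (\Delta_j \times I)$, the prism boundary formula reads
\begin{align}
\partial P = \big(C^{(0)} - C^{(1)}\big) \pm (\partial C) \times I ,
\end{align}
where the sign of the wall term, and the orientation of $I$, are fixed once and for all so that the first term appears as written. The key point is that the prism realizes a chain homotopy between the two copies of $C$, converting the (generally non-bounding) chain $C^{(0)} - C^{(1)}$ into a boundary up to a controllable wall contribution.

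Next, because $\partial P$ is a boundary, \cref{lem:moveWithinHomologyclass} guarantees that the cochain circuit by $\omega$ on $\partial P$ equals a product of $G$-symmetric gates, one supported on each $(k+1)$-simplex of $P$. I would then \emph{group} all the gates coming from a single prism $\Delta_j \times I$ into one gate $W_j$; since every $(k+1)$-simplex of that prism is spanned by its $2k+2$ vertices, each $W_j$ is a $G$-symmetric gate on $2k+2$ qudits, as required. To finish, observe that $\prod_j W_j$ is, by construction, the cochain circuit by $\omega$ on $\partial P$, which by the displayed formula factors as the circuit on $C^{(0)}$ (namely $V$), times the circuit on $-C^{(1)}$ (namely $V^\dagger$), times the circuit on $\pm (\partial C) \times I$. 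This last factor is trivial: since $\omega$ is valued in $\ZZ/n \subset \RR/\ZZ$ and $\partial C = 0 \bmod n$, the chain $(\partial C) \times I$ has all coefficients divisible by $n$, so every phase $\exp(2\pi i\, c\,\omega(\cdots))$ with $n \mid c$ equals $1$. Hence $V \otimes V^\dagger = \prod_j W_j$ with each $W_j$ of the stated form.

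The step I expect to be the main obstacle is the orientation and vertex-ordering bookkeeping: one must choose the staircase triangulation and its orientations so that the grouped prism circuit genuinely separates into the copy circuits $V$ and $V^\dagger$ plus a clean wall term $(\partial C) \times I$, and one must confirm that the interior walls shared by adjacent prisms cancel while the surviving boundary walls are exactly those killed by the hypothesis $\partial C = 0 \bmod n$. The conceptual content is entirely in recognizing the prism as the right chain homotopy; everything else reduces to \cref{lem:cyclecomm} and \cref{lem:moveWithinHomologyclass}.
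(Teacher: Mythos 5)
Your proposal is correct and follows essentially the same route as the paper: the staircase triangulation of $\Delta_j \times I$ is exactly the paper's prism operator $P\Delta = \sum_j (-1)^j \{v'_0,\ldots,v'_j,v_j,\ldots,v_k\}$, your boundary formula is the chain-homotopy identity $\partial P(C) + P(\partial C) = C - C'$, the wall term is killed by $\partial C \equiv 0 \pmod n$ just as in the paper, and the final symmetric decomposition comes from \cref{lem:moveWithinHomologyclass} in both arguments. The orientation bookkeeping you flag as the main obstacle is precisely the standard prism identity, which the paper dispatches as routine.
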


The tensor product $V \otimes V^\dagger$ is a unitary 
which acts on a tensor product of two copies of the system, 
so that it acts on a system of qudits on the Voronoi cellulation 
$\cellfixed \sqcup \cellfixed$.  
Further, $V\otimes V^\dagger$ is a $G$-cochain circuit by cocycle $\omega$ 
on a chain equal to the sum of $C$ on the first copy of $\cellfixed$ 
and $-C$ on the second copy.

\begin{proof}
Let $\Delta = \{ v_0,\ldots, v_k \}$ be any simplex of 
$C=\sum_a c_a \Delta_a$ and $\Delta' = \{ v'_0,\ldots, v'_k \}$ its copy.
We construct a $(k+1)$-simplicial chain $P \Delta$ (called the prism operator)
on vertices $\Delta \sqcup \Delta'$ by 
$P \Delta = \sum_{j=0}^k (-1)^j \{ v'_0, \ldots, v'_j, v_j,\ldots,v_k \}$.
It is routine to check that $\Delta - \Delta' = \partial P \Delta + P \partial \Delta$.
Define $P(C) = \sum_a c_a P(\Delta_a)$.
Then, by linearity $\partial P(C) + P(\partial C) = C - C'$ as integral chains.
Since $\partial C = 0 \bmod n$, the $\omega$-circuit on $\partial P(C)$ is $V \otimes V^\dagger$,
which has locally symmetric decomposition by \cref{lem:moveWithinHomologyclass}.
\end{proof}

\section{Equivalence of GDS Dual with Product of Cohomology Phases}
\label{sec:GDScircuitVScohomology}

We now tailor our discussion to the symmetry group that contains an internal symmetry group~$\ZZ_2$, the overall flip.
Let us fix a nontrivial homogeneous representative $\omega_k$ of the unique nonzero cohomology class in $H^k(\ZZ_2; \ZZ_2 \subset \RR/\ZZ) = \ZZ_2$ as%
\footnote{
The cochain can shown to be a cocycle by direct computation;
in order for the coboundary of $\omega$ to assume a nonzero value
there should not be any triple repetition in the argument,
but then any double repetition in the argument yields two terms that cancel with each other, and
alternating arguments gives two nonzero terms which cancel.
The chain $\omega$ is not a coboundary since $\sum_{\vec x} (\delta \lambda)(e,\vec x) = 0 \bmod 1$ 
for any homogeneous $\lambda:G^{k+1} \to \{0, 1/2\}$.
}
\begin{align}
\begin{cases}
\omega_k(X,e,X,e,\ldots) = \frac 1 2\\
\omega_k(e,X,e,X,\ldots) = \frac 1 2\\ 
\omega_k(\text{any other}) = 0
\end{cases},
\label{eq:omegak}
\end{align}
where $\ZZ_2 = \{e, X \}$.  We emphasize that we use this representative for both even and odd $k$.
Note that the cochain gate $U_{\omega_k,\Delta^k}$ has one and only one eigenvalue of $-1$.
We wish to consider all possible compositions of cochain gates by $\omega_k$.
For a system of $d+1$ qubits, we define $\mathcal U_d$ to be 
the group of all diagonal unitaries (in the standard basis of $\CC[\ZZ_2]^{\otimes(d+1)}$)
whose eigenvalues are all $\pm 1$.
\begin{lemma}\label{lem:omegaGenerateUk}
$\mathcal U_d$ is generated by cochain gates associated with $\omega_k$ for $k = 0,1,\ldots,d$
on various sets of qubits, and $\pm I$.
\end{lemma}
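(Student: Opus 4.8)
The plan is to reinterpret $\mathcal U_d$ as a vector space over $\FF_2$ and reduce the claim to a triangular spanning argument. I identify the nontrivial element of $\ZZ_2$ with $1 \in \FF_2$ and the identity with $0$, so that a computational basis state of the $d+1$ qubits is labelled by a vector $b = (b_0,\ldots,b_d) \in \FF_2^{d+1}$. Every element of $\mathcal U_d$ is then of the form $D_\phi = \sum_b (-1)^{\phi(b)} \ket b \bra b$ for a unique function $\phi : \FF_2^{d+1} \to \FF_2$, and since $D_\phi D_\psi = D_{\phi + \psi}$ the group $\mathcal U_d$ is isomorphic to the additive group of such functions. This is an elementary abelian $2$-group; as an $\FF_2$-vector space it has dimension $2^{d+1}$, and a convenient basis is the set of multilinear monomials $m_T := \prod_{i \in T} b_i$ indexed by subsets $T \subseteq \{0,\ldots,d\}$ (with $m_\emptyset = 1$). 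Proving the lemma thus amounts to showing that the functions $\phi$ attached to the generators span this space.

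Next I would compute the function $\phi$ of a single generator, using that every $(k+1)$-subset of the $d+1$ qubits is the vertex set of a $k$-simplex to which we may apply $U_{\omega_k}$. By the definition of $\omega_k$ together with the observation already noted in the text, the gate $U_{\omega_k,\Delta}$ on a $k$-simplex with vertex set $S$ (of size $k+1$) multiplies by $-1$ exactly one configuration of the qubits in $S$ --- the alternating one determined by the chosen vertex ordering --- and acts as the identity on the qubits outside $S$. Hence its $\phi$ is the indicator, over $\FF_2^{d+1}$, of a single partial configuration $c \in \FF_2^{S}$, namely $\phi_S(b) = \prod_{i \in S}(b_i + c_i + 1)$, which depends only on the coordinates in $S$. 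Expanding this product over $\FF_2$ gives the crucial triangular structure
\[
\phi_S = m_S + (\text{a sum of monomials } m_{T} \text{ with } T \subsetneq S),
\]
i.e.\ the top-degree term is $m_S$ and every other term is a monomial supported on a proper subset of $S$; note this leading term is $m_S$ regardless of the vertex ordering, since the ordering only affects the lower-order terms.

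Finally I would run an induction on $|T|$ to show every monomial $m_T$ lies in the subgroup $V$ generated by the cochain gates and $\pm I$. The base case $T = \emptyset$ is handled by the extra generator $-I = D_1$, which realizes the constant function $m_\emptyset = 1$. For $|T| \ge 1$, pick the cochain gate on the simplex with vertex set $T$ (using $\omega_{|T|-1}$ with any fixed ordering); by the triangular identity its function is $m_T$ plus monomials supported on proper subsets of $T$, all of which lie in $V$ by the inductive hypothesis, so $m_T \in V$ as well. Since the $m_T$ form a basis of the full function space, $V = \mathcal U_d$. I do not expect a genuine obstacle here: the content is entirely in the leading-term computation of the previous step, and the only points requiring care are the inclusion of $-I$ to produce the constant function and the bookkeeping that the map from generators to leading monomials is unitriangular with respect to the inclusion order on subsets --- in fact this shows the chosen generators (one per nonempty subset) together with $-I$ form a basis of $\mathcal U_d$, not merely a spanning set.
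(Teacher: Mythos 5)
Your proposal is correct and takes essentially the same route as the paper's proof: both identify diagonal $\pm 1$ unitaries with multilinear polynomials over $\FF_2$ in the qubit variables, observe that the gate by $\omega_k$ on a $(k+1)$-subset $S$ has leading monomial $\prod_{i\in S} x_i$ plus lower-order terms, and conclude by a triangular spanning/induction argument. Your write-up is somewhat more explicit than the paper's (handling the constant via $-I$ and noting ordering-independence of the leading term), but the underlying idea is identical.
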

\begin{proof}
Let $\{ \ket 0 , \ket 1 \}$ be a basis of $\CC[\ZZ_2]$.
Consider a unitary in $\mathcal U_d$ which has all diagonal entries equal to $1$ except for one single diagonal entry equal to $-1$ on $\ket{y_0,y_1,\ldots,y_d}$.
Then this unitary has eigenvalue $e^{i \pi (x_0-y_0-1) \cdots (x_d-y_d-1)}$ 
on $\ket{x_0,\ldots,x_d}$.
Since these unitaries generate all elements of $\mathcal U_d$, 
we see that any element of $\mathcal U_d$ gives a multivariate polynomial
with binary coefficients modulo relations $x_j^2 = x_j$, 
and in turn, any such polynomial gives an element of $\mathcal U_d$.
So, there is a one-to-one correspondence (actually a group isomorphism)
between $\mathcal U_d$ and a quotient polynomial ring 
$R = \ZZ_2[x_0,\ldots,x_d]/(x_0^2-x_0,\ldots,x_d^2-x_d)$
with the multiplication in the polynomial ring forgotten.

Each cochain gate is then identified with
$x_0 (1+x_1)x_2 \cdots x_k$ for $k$ even or
$x_0 (1+x_1)x_2 \cdots (x_k + 1)$ for $k$ odd,
a multivariate polynomial of degree%
\footnote{
Strictly speaking, we do not have a ring homomorphism from $R$ to $\ZZ_2[x]$
that will allow us to speak of the degree in the usual sense.
Nonetheless, here we use the term ``degree'' of a monomial 
to mean the number of distinct variables in it,
and the degree of an element of $R$ is the maximum monomial degree over its all terms.
} $k+1$.
So, we must show that the span over $\ZZ_2$ of polynomials identified with cochain gates
and scalar $1$ that comes from $\pm I$ in the assumption,
includes all of $R$.
We proceed by induction in the degree.
Degree zero polynomials are clearly in the span of $1$.
Degree one monomials appear in the span of 
(the polynomials of the cochain gates associated with) $\omega_0$,
and thus all degree one polynomials are in the span of $1, \omega_0$.
If all polynomials of degree $k-1$ are in the span of $1,\omega_0,\ldots,\omega_{k-2}$,
then, since polynomials of $\omega_{k-1}$ contain all possible monomials of degree $k$,
all polynomials of degree $k$ are in the span of $1,\omega_0,\ldots,\omega_{k-1}$.
\end{proof}

\begin{lemma}[Cohomology gates generate all globally $\ZZ_2$-symmetric diagonal circuits]
\label{lem:allcohomology}
Let $U$ be a quantum circuit of diagonal gates, 
each of which acts on at most $d+1$ qubits $\CC [\ZZ_2] = \CC^2$ and has eigenvalues $\pm 1$.
Suppose $U$ commutes with $\bar X$ (the global flip).
Then, $U$ is equal to a product $V_0 \cdots V_d$ of cochain circuits,
where the cochain circuit $V_k$ is by $\omega_k$ on a $\ZZ_2$-closed simplicial $k$-chain
for $k = 0,\ldots, d$. 
Here, the homological boundary of a point is nonzero but the empty set (reduced homology).
The depth of $V_0 \cdots V_d$ (the number of layers of nonoverlapping gates) 
is at most the depth of $U$ times a finite function of $d$.
\end{lemma}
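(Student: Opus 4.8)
The plan is to pass to the polynomial description of diagonal $\pm1$ circuits supplied by \cref{lem:omegaGenerateUk} and then read off the symmetry constraint degree by degree. I would identify each diagonal gate of eigenvalues $\pm1$ on a set of qubits labelled $x_v \in \FF_2$ with the $\FF_2$-polynomial $p$ for which the gate is $e^{i\pi p}$; under this identification a cochain gate $U_{\omega_k,\Delta}$ on a $k$-simplex $\Delta = \{v_0,\dots,v_k\}$ is the indicator $x_{v_0}(1+x_{v_1})x_{v_2}\cdots$ of a single alternating configuration (consistent with its having exactly one $-1$ eigenvalue), whose top-degree part is the squarefree monomial $m_\Delta = \prod_{v\in\Delta} x_v$ of degree $k+1$. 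The global flip $\bar X$ acts by $x_v \mapsto x_v + 1$, so $U$ commuting with $\bar X$ is exactly the statement $p(x) = p(x+\one)$ over $\FF_2$. Writing $p = \sum_j p_j$ in homogeneous degrees, the key elementary fact is that the degree-$k$ part of $p_{k+1}(x+\one)$ equals $D p_{k+1}$, where $D = \sum_v \partial_{x_v}$ is the total formal derivative, and that $D$ acts on squarefree monomials exactly as the mod-$2$ simplicial boundary: $D m_\Delta = \sum_{u\in\Delta} m_{\Delta\setminus u}$ collects the faces of $\Delta$.

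First I would extract the top-dimensional piece. Since every $\omega_k$-gate with $k<d$ has degree $<d+1$, the top-degree part $p_{d+1}$ is a sum of monomials $m_\Delta$ over $d$-simplices; let $C_d$ be the associated $\FF_2$ $d$-chain and $V_d$ the cochain circuit by $\omega_d$ on $C_d$. Matching the degree-$d$ part of $p(x+\one)=p(x)$ then forces $p_d + D p_{d+1} = p_d$, i.e. $D p_{d+1}=0$, i.e. $\partial C_d = 0 \bmod 2$. By \cref{lem:cyclecomm} closedness of $C_d$ is equivalent to $V_d$ commuting with $\bar X$, so $\tilde U := U V_d^\dagger$ is again a $\bar X$-symmetric diagonal $\pm1$ circuit whose polynomial now has degree $\le d$. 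I would iterate: the top part of $\tilde U$ defines $C_{d-1}$ and $V_{d-1}$, the symmetry again gives $\partial C_{d-1} = 0 \bmod 2$, and so on down to $k=0$. At the bottom the residual polynomial is $c_0 + \sum_v c_v x_v$; invariance under $x \mapsto x+\one$ reads $\sum_v c_v = 0 \bmod 2$, which is precisely $\partial C_0 = 0$ in reduced homology (an even number of $\omega_0$ points), while $c_0$ is the residual global sign $\pm I$. This yields the factorization $U = V_0 \cdots V_d$ into cochain circuits on $\ZZ_2$-closed chains.

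For the depth bound I would run the same top-down procedure at the level of gates rather than of the global polynomial: decompose each gate of $U$ into a bounded (depending only on $d$) number of $\omega_k$-cochain gates on subsets of its support via \cref{lem:omegaGenerateUk}, so that the supports and layering of $U$ are inherited and each $V_k$ has depth at most a $d$-dependent constant times the depth of the current residual circuit. Since each stage replaces the residual $\tilde U$ by $\tilde U V_k^\dagger$ and thereby multiplies the depth by at most some constant $c_d$, after the $d+1$ stages the total depth of $V_0\cdots V_d$ is bounded by $(1+c_d)^{d+1}$ times the depth of $U$, a finite function of $d$.

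The main obstacle is the middle step: the flip symmetry only directly constrains $p$, and a priori it controls the boundary of the top chain alone. The real content is that, once the top chain is peeled off and shown closed, the residual circuit remains \emph{exactly} $\bar X$-symmetric --- this is where \cref{lem:cyclecomm}, in the form ``closed chain $\Leftrightarrow$ symmetric cochain circuit,'' is essential --- so the identical degree-graded argument reproduces closedness at every lower dimension. The only other delicate points are bookkeeping the reduced-homology convention at $k=0$, so that the parity of the number of $\omega_0$-gates rather than a naive boundary is what the symmetry constrains, and checking that the gate-level implementation of the extraction does not inflate the depth beyond the stated $d$-dependent factor.
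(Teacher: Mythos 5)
Your proposal is correct and takes essentially the same approach as the paper: both pass to the multilinear-polynomial picture via \cref{lem:omegaGenerateUk} and run a downward induction on degree, with your formal-derivative identity $D m_\Delta = \sum_{u \in \Delta} m_{\Delta \setminus u}$ being precisely the polynomial restatement of the commutator-equals-boundary mechanism of \cref{lem:cyclecomm}. The only difference is organizational --- you peel off one $V_k$ at a time and re-symmetrize the residual, whereas the paper decomposes every gate into cochain gates up front and kills all boundary terms at once in the commutator $U \bar X U \bar X$ --- and the two bookkeepings determine the same chains modulo $2$.
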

This lemma takes place in an abstract simplicial complex.  Note that by \cref{lem:cyclecomm} each of $V_0,\ldots,V_d$ commutes with $\bar X$.
\begin{proof}
By \cref{lem:omegaGenerateUk} each gate of the circuit $U$ can be written 
as a product of cochain gates associated with $\omega_k$ for various $k$.
Here, a simplex on which a cochain gate acts is defined simply as a collection of qubits.
Thus, we may assume that $U$ is a circuit consisting of cochain gates.
This may blow up the depth of the circuit, but not by more than a finite factor that depends only on $d$.

Consider the $\ZZ_2$-simplicial $d$-chain $C_d$ given by the sum of all simplices 
on which $\omega_d$-gates are supported.
In \cref{lem:cyclecomm} 
we have seen that taking the commutator with the symmetry operator pushes the dimension of the gate down by one (the dimension of the gate is the dimension of the simplex on which the gate acts).
Hence, in the commutator $U \bar X U \bar X$, the $(d-1)$-dimensional part (and by induction every dimensional part) must vanish,
which implies $\partial C_k = 0 \bmod 2$ for all $k = d, d-1,\ldots,0$.
\end{proof}

Note that even if the system of qubits is embedded in a manifold, 
the number $d$ in the lemma above may be larger than the manifold's topological dimension;
e.g., on the two-dimensional square lattice a gate may act on four qubits that comprise a square.
However, any such simplicial chain, if closed, is necessarily null-homologous.  
Note also that in \cref{lem:omegaGenerateUk}, 
if the polynomial identified with $\mathcal U_d$ is a sum of monomials on simplices of a complex,
then $\mathcal U_d$ is generated by cochain gates on simplices of that complex.

\section{Entanglement RG}\label{sec:RG}

The (inverse) entanglement RG transformation
is simply a quantum circuit whose gates are individually symmetric.
Some new degrees of freedom can be added, but they are initialized in the manifestly symmetric state.
Since the GDS dual or any cohomology state is defined on a simplicial complex 
rather than on a more general cell complex,
we are going to define a sequence of ``moves'' from a triangulation to another triangulation.
It will be useful for us to focus on moves 
such that the composition of all the moves is the transition from the original triangulation
to its barycentric subdivision.

\begin{figure}
\centering
\includegraphics[width=0.9\textwidth, trim = {0ex 90ex 50ex 0ex}, clip]{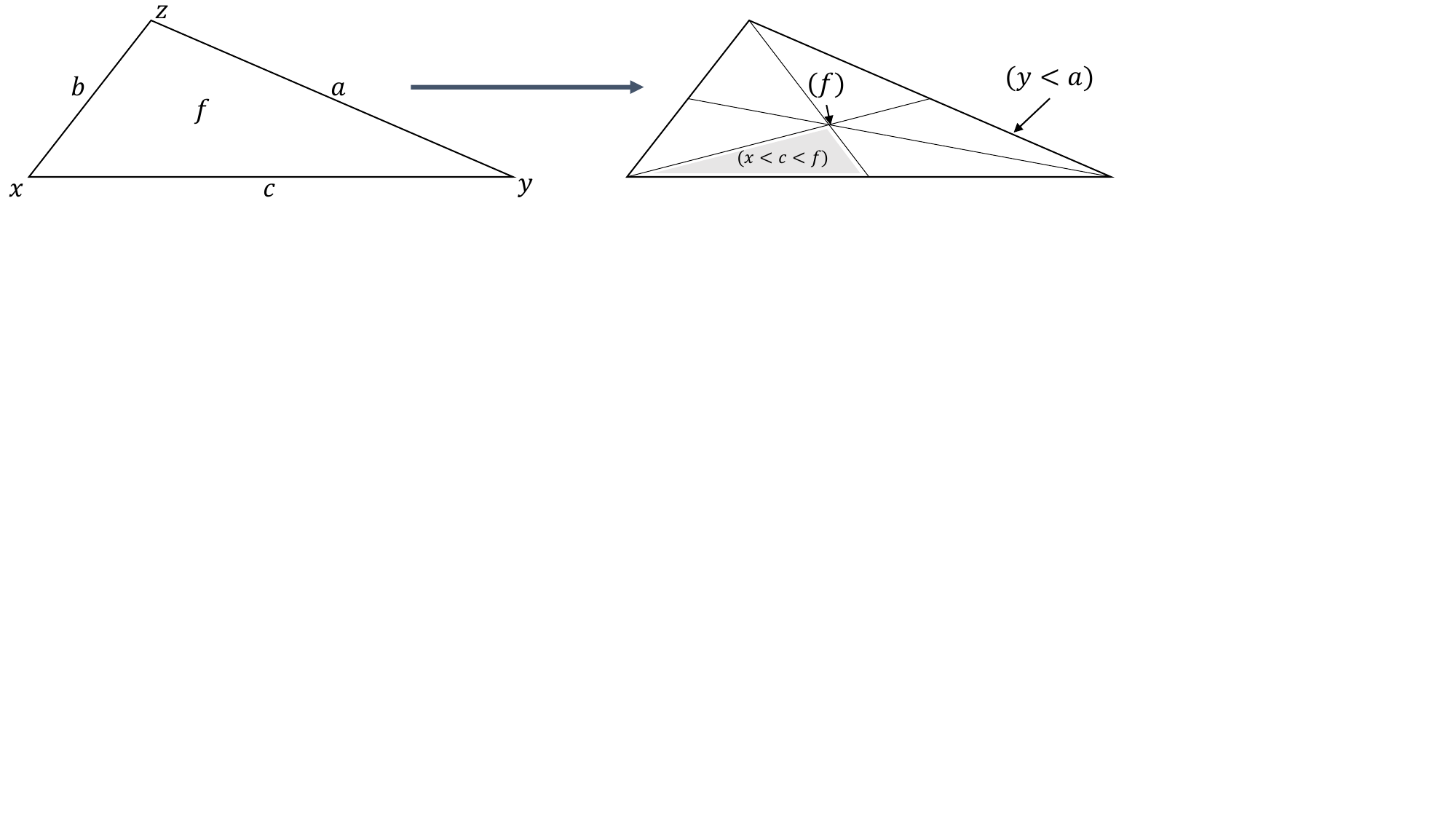}
\caption{Barycentric subdivision of a $2$-simplex.
$x,y,z$ are vertices, $a,b,c$ are edges, and $f$ is a face.
}
\label{fig:bcsd}
\end{figure}

\begin{figure}
\centering
\includegraphics[width=0.9\textwidth, trim = {0ex 80ex 30ex 0ex}, clip]{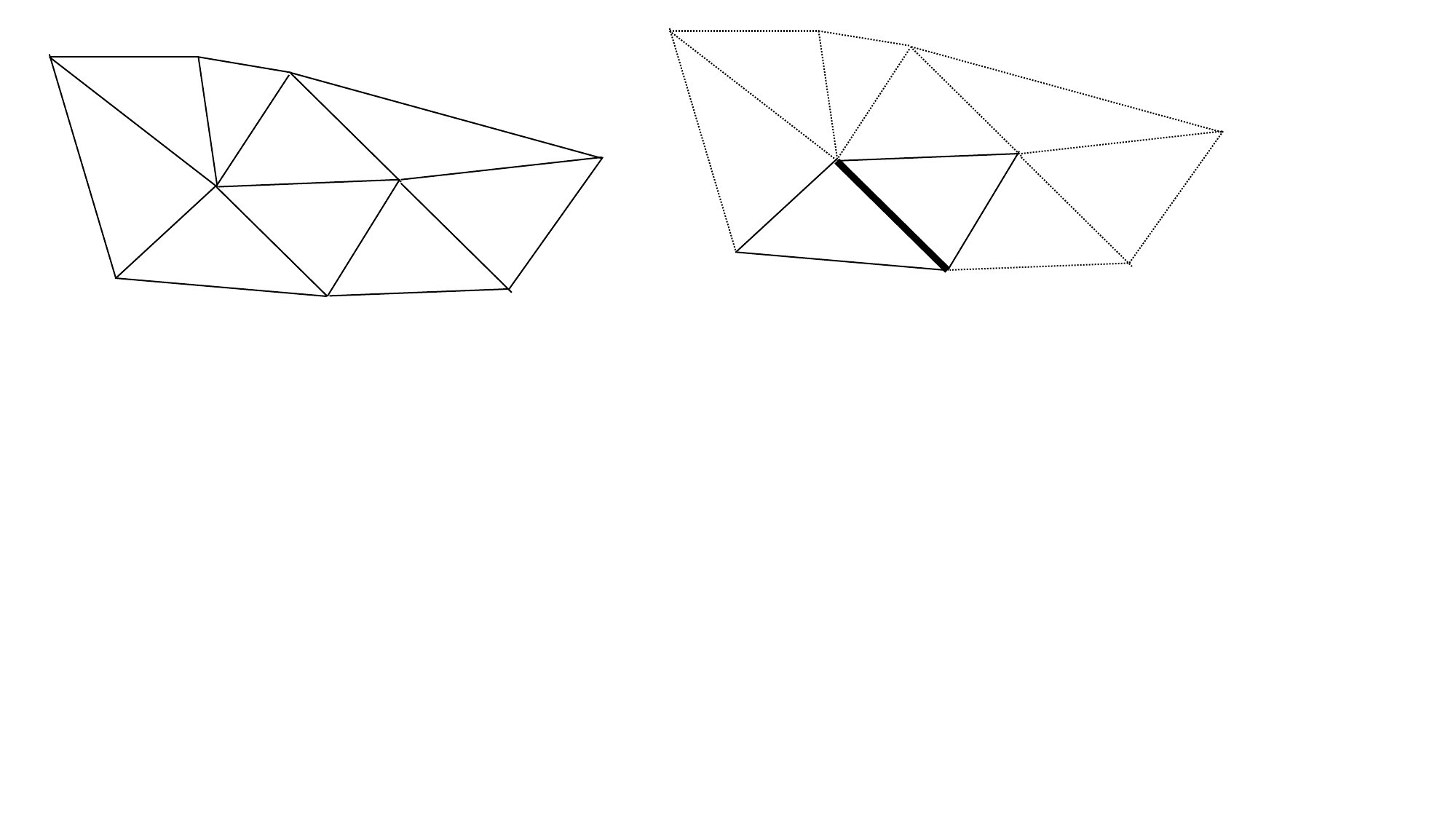}
\caption{A $2$-dimensional simplicial complex and the closed star of a $1$-simplex.}
\label{fig:closedstar}
\end{figure}

\begin{figure}
\centering
\includegraphics[width=0.9\textwidth, trim = {0ex 110ex 45ex 0ex}, clip]{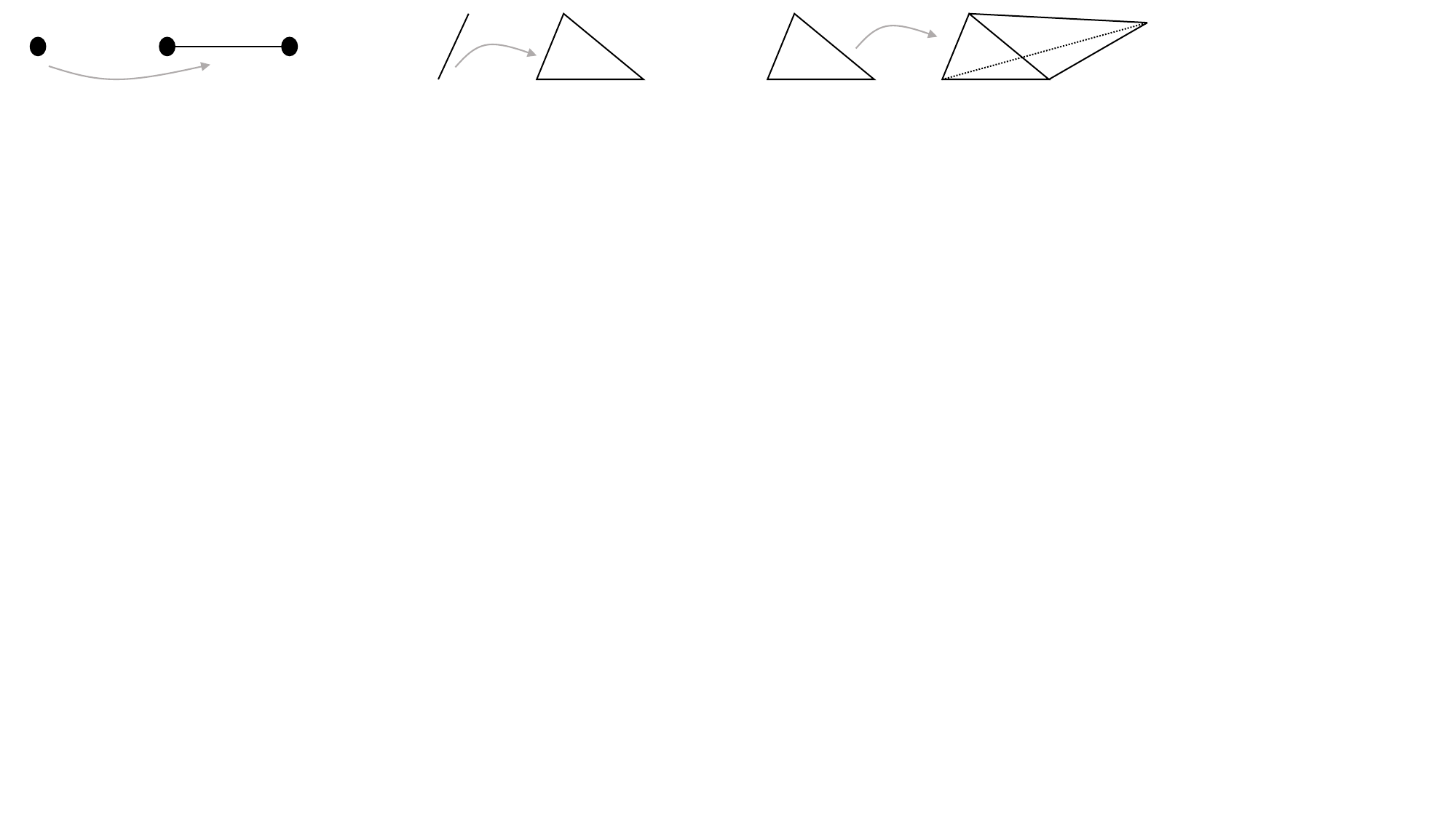}
\caption{The cone over a $k$-simplex is a $(k+1)$-simplex.}
\label{fig:cones}
\end{figure}

Before we present our construction it may be useful to remind standard notions in topology.
The {\bf barycentric subdivision} of a simplicial complex $L$ is a simplicial complex $L'$
where the $k$-simplicies of $L'$ are in one-to-one correspondence with the sequences 
$(\sigma_0 < \sigma_1 < \cdots < \sigma_k)$ of simplices of $L$.
Here $\sigma < \tau$ means that $\sigma$ is a proper nonempty face of $\tau$.
(See \cref{fig:bcsd} for the barycentric subdivision of a $2$-simplex.)
In particular, a vertex of $L'$ is identified with (the barycenter of) a simplex of $L$.
For example, in a triangulation of a surface, the barycentric subdivision has three ``types'' of vertices 
corresponding to the dimension of the simplex $\sigma_0$ of $L$,
and three ``types'' of edges:
$(\Delta^0 < \Delta^1)$ that connects a vertex of $L$ to the barycenter of an edge of $L$,
$(\Delta^0 < \Delta^2)$ that connects a vertex of $L$ to the barycenter of a triangle of $L$,
and
$(\Delta^1 < \Delta^2)$ that connects the barycenter of an edge of $L$ to the barycenter of a triangle of $L$.
(See the right figure of \cref{fig:bcsd}.)
The {\bf star} of a simplex $\Delta$ is the union of all simplices that have $\Delta$ as a face.
The {\bf closed star} of a simplex is the smallest simplicial complex that includes the star.
(See \cref{fig:closedstar}.)
A {\bf cone} over a space $\mathcal Y$ is the quotient space of $\mathcal Y \times [0,1]$
under the identification of $\mathcal Y \times \{ 1 \}$ as a single point (apex).
A cone over a $k$-simplex is a $(k+1)$-simplex.
(See \cref{fig:cones}.)
For a subset $\mathcal X$ of a space $\mathcal Y$,
a point $y \in \mathcal Y$ is in the {\bf topological boundary} of $\mathcal X$ in $\mathcal Y$ 
if and only if
every open neighborhood of $y$ intersects both $\mathcal X$ and $\mathcal Y \setminus \mathcal X$.

\subsection{Barycentric subdivision by local moves}

We design two kinds of ``local'' moves from any simplicial complex $L$ to another simplicial complex
with the same underlying space such that the composition of these moves turns $L$ into its barycentric subdivision
$L'$. Let $d$ be the dimension of $L$.
It may be instructive to see \cref{fig:t1t2} as we go along.

\begin{figure}
\centering
\includegraphics[width=\textwidth, trim = {0ex 70ex 0ex 0ex}, clip]{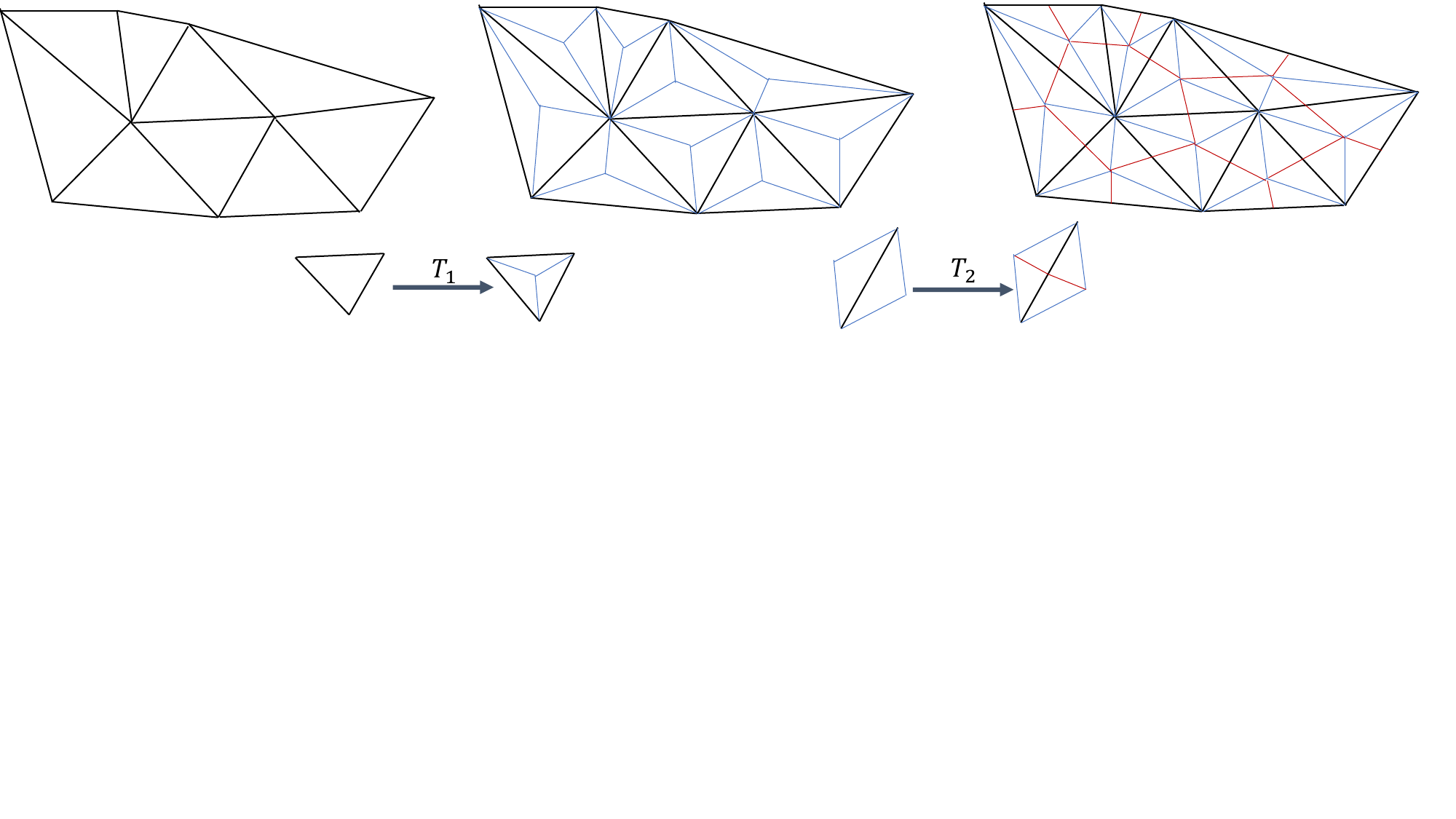}
\caption{Barycentric subdivision of a triangulation by local moves.}
\label{fig:t1t2}
\end{figure}

The first move is to replace a $d$-simplex
$\Delta^d = \{ v_0,\ldots, v_d \}$
with a $d$-dimensional simplicial complex $T_1(\Delta^d)$ with $d+2$ vertices $\{ v_0,\ldots, v_d, (\Delta^d) \}$.
The additional vertex $(\Delta^d)$ should be depicted in mind 
as one that sits at the barycenter of $\Delta^d$;
it will become the vertex $(\Delta^d)$ of the barycentric subdivision
and thus we used the same notation.
Every proper face of $\Delta^d$ is again a face of $T_1(\Delta^d)$,
and every $d$-dimensional simplex of $T_1(\Delta^d)$ 
is $\{ v_{j_0},\ldots, v_{j_{d-1}}, (\Delta^d) \}$
which is a cone over a $(d-1)$-face of $\Delta^d$
where the apex is the additional vertex $(\Delta^d)$.
Thus, $T_1(\Delta^d)$ has $(d+1)$ simplices of dimension $d$.
Note that the boundaries of $\Delta^d$ and $T_1(\Delta^d)$ are the identical simplicial complexes.

The second move is to replace the closed star
of a $(d-1)$-simplex $\Delta^{d-1}$ of $L$ with a simplicial complex $T_2(\Delta^{d-1})$.
The closed star of $\Delta^{d-1}$ is a union of cones
over $\Delta^{d-1}$, one for each vertex $t_j$ in the link
of $\Delta^{d-1}$ in $L$.
The complex $T_2(\Delta^{d-1})$ is defined to be the triangulation of this union
derived from the barycentric subdivision of $\Delta^{d-1}$ in the obvious way;
it includes all the simplices of the barycentric subdivision of $\Delta^{d-1}$
as well as all their cones, the apex of each is a vertex $t_j$ in the link of $\Delta^{d-1}$ in $L$.
As in the first move, the new complex $T_2(\Delta^{d-1})$ and the closed star of $\Delta^{d-1}$ 
have the same simplicial complex on their topological boundary
in $|L| = |L'|$;
no subdivision was made on the topological boundary.

The promised sequence of moves is to apply $T_1$ for every $d$-simplex of $L$ to obtain $L^\circ$,
which includes the $(d-1)$-skeleton of $L$ (and more),
and then $T_2$ for every $(d-1)$-simplex of $L$ where the star is taken in $L^\circ$.
The second kind of move should not be applied to any additional $(d-1)$-simplex that is introduced by $T_1$,
and each vertex $t_j$ of the previous paragraph is a barycenter of some $d$-simplex of $L$.
It is clear from the construction that the resulting simplicial complex is
the barycentric subdivision $L'$ of the original simplicial complex $L$.

\subsection{Mapping to barycentric subdivision}

Having constructed the sequence of moves between simplicial complexes,
we consider the difference of the GDS dual or cohomology states along the moves,
measured in the generating circuits.
For each move $T_1$ or $T_2$, the change occurs in a local simplicial subcomplex $A$
that involves one or more $d$-simplices.
The difference of the circuits is to remove all the gates on $A$
and introduce gates on $T(A)$ where $T = T_1$ or $T_2$.
Let $L$ be a $d$-dimensional simplicial complex such that every vertex is contained in at most $R$ simplices.
Let $L'$ be the barycentric subdivision of $L$.

\begin{lemma}\label{lem:RG-cohomology}
Let $C$ be a $k$-chain of $L$ such that $\partial C = 0 \bmod n$,
and $\omega$ be a $k$-cocycle of a group $G$ valued in $\ZZ/n \subset \RR/\ZZ$.
Let $\ket \psi \otimes \overline{\ket +}$ denote the cochain state $\ket \psi$ by $\omega$ on $C$
tensored with $\ket + = \sum_{g \in G}\ket g$ over all barycenters of nonzero dimensional simplices of $L$.
Then, there is a locally $G$-symmetric quantum circuit on $L'$ of depth $\le$ a finite function of $R$,
under which $\ket \psi \otimes \overline{\ket +}$
is mapped to the cochain state by $\omega$ on a $k$-chain $C'$ of $L'$ homologous to $C$.
\end{lemma}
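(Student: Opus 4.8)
The plan is to transport the cochain state across the explicit sequence of local moves $T_1$ and $T_2$ from the previous subsection, whose composition carries $L$ into its barycentric subdivision $L'$, handling one move at a time. Throughout I take the target chain to be the barycentric subdivision $C'=\mathrm{sd}(C)$: since subdivision commutes with $\partial$ and is the identity on geometric chains, $C'$ satisfies $\partial C'=0$ modulo $n$ and is homologous to $C$ in $H_k(|L|;\ZZ/n)$, so the homological conclusion of the lemma is automatic and the entire content is the \emph{subdivision invariance} of the cochain circuit. Because each move is supported in a subcomplex $A$ of size bounded by $R$, moves with disjoint supports commute; the all-$T_1$ phase and the subsequent all-$T_2$ phase can each be split into a finite number of layers of pairwise-disjoint moves by coloring the (bounded-degree) intersection graph of the affected regions, so a total depth that is a finite function of $R$ will follow once each single move is realized by a bounded-depth locally $G$-symmetric circuit.

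Fix one move $T\in\{T_1,T_2\}$, with affected subcomplex $A$ replaced by its subdivision $T(A)$; recall that the move leaves the topological boundary $\partial A$ unsubdivided, so the portion of the current chain outside $A$ is untouched and only the piece $C_A:=C\cap A$ is replaced by its subdivision in $T(A)$. The newly introduced vertices are the barycenters appearing in $T(A)$, and they enter in the state $\ket{+}$, exactly as in $\ket{\psi}\otimes\overline{\ket{+}}$. The circuit difference produced by the move is therefore the replacement of the gates $U_\omega(C_A)$ on simplices of $A$ by the gates $U_\omega(\mathrm{sd}\,C_A)$ on simplices of $T(A)$, and it remains to show that this replacement is a bounded-depth product of $G$-symmetric gates.

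This last step is the crux and the step I expect to be the main obstacle. For a single $k$-simplex $\Delta$ of $A$, the difference between the phase assigned by $U_\omega(\Delta)$ and the total phase assigned by the gates of $U_\omega$ on the subdivided pieces $\mathrm{sd}\,\Delta$ is controlled by the cocycle identity $\delta\omega=0$ evaluated on the cone simplices of the subdivision, exactly as in the computations of \cref{lem:cyclecomm} and \cref{lem:coboudarylocalsymmetric}. The residual gates are supported on the $(k+1)$-simplices of the subdivision cone and, because $\omega$ is homogeneous, each such residual phase is invariant under the simultaneous action of $G$ on its qudits; hence these gates are individually $G$-symmetric, and they act on the original qudits of $\Delta$ together with the newly added barycenter qudits in $\ket{+}$. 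Assembling the residual gates over all simplices of $C_A$ realizes the move as a product of $G$-symmetric gates on $(k+1)$-simplices, precisely of the type produced by \cref{lem:moveWithinHomologyclass}; where a pair of oppositely oriented pieces must be introduced together, \cref{lem:invertible} supplies the corresponding symmetric prism gates. The delicate points to verify are the orientations and signs in the cone decomposition, so that the residual phases are genuinely coboundary contributions and the correction gates are genuinely $G$-symmetric and local.

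Composing the symmetric circuits over all moves, and bounding the depth by the coloring argument of the first paragraph (each move contributing $O(1)$ layers of gates on at most $k+2\le d+2$ qudits, with $d$ itself bounded in terms of $R$), yields a locally $G$-symmetric circuit on $L'$ of depth a finite function of $R$ that carries $\ket{\psi}\otimes\overline{\ket{+}}$ to the cochain state by $\omega$ on $C'=\mathrm{sd}(C)$, which is homologous to $C$.
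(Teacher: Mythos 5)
Your high-level structure matches the paper's: transport the state move by move through $T_1$ and $T_2$, take $C'$ to be the canonical subdivision image of $C$, and bound the depth by parallelizing moves with disjoint supports. The gap is in your crux step, and it is a real one. You claim that for a \emph{single} $k$-simplex $\Delta$ the difference between the gate on $\Delta$ and the gates on its subdivided pieces decomposes into residual gates that are ``individually $G$-symmetric'' because $\omega$ is homogeneous. Homogeneity does not make cochain gates symmetric --- if it did, every cochain circuit would already be a locally symmetric circuit and \cref{lem:cyclecomm} would have no content. Moreover the per-simplex claim is simply false once faces of $\Delta$ are subdivided by the move: by \cref{lem:cyclecomm}, the commutator of the single-simplex difference circuit with $\bar g$ is a cochain circuit by $\omega(g,\cdot)$ on $\mathrm{sd}(\partial\Delta)-\partial\Delta \neq 0$. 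Concretely, in the move $T_2$ for $d=2$, $k=2$, with $e=\{v_0,v_1\}$ an edge of $L$, $b_e$ its barycenter and $t_1,t_2$ the barycenters of the two adjacent triangles: replacing $\{t_1,v_0,v_1\}$ by $\{t_1,v_0,b_e\}+\{t_1,b_e,v_1\}$ changes the boundary by $\{v_0,b_e\}+\{b_e,v_1\}-\{v_0,v_1\}\neq 0$, so this single-simplex difference circuit is \emph{not} symmetric; the violation cancels only against the companion simplex $\{t_2,v_0,v_1\}$ on the other side of $e$. Equivalently, in the chain-homotopy identity $\mathrm{sd}(\Delta)-\Delta=\partial H(\Delta)+H(\partial\Delta)$, the term $\partial H(\Delta)$ does give symmetric gates via \cref{lem:moveWithinHomologyclass}, but you have dropped $H(\partial\Delta)$, a cochain circuit on a non-closed $k$-chain, which is not symmetric. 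Tellingly, your crux step never invokes the hypothesis $\partial C = 0 \bmod n$, without which the lemma is false.

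The paper's proof avoids all of this by arguing at the level of the whole move region $A$ rather than simplex by simplex: because $\partial C = 0 \bmod n$, the boundary $\partial(C|_A) \bmod n$ must be supported on the topological boundary of $A$, and since neither move subdivides that topological boundary, $\partial(C|_A)=\partial(C'|_{T(A)}) \bmod n$. Hence the entire difference $V(A)$ --- the cochain circuit on $C'|_{T(A)}-C|_A$ --- lives on a chain that is closed mod $n$, and \cref{lem:cyclecomm} makes $V(A)$ a single $G$-symmetric gate whose support is bounded in terms of $R$. That region-level use of the hypothesis is exactly what your argument is missing; once it is in place, your coloring and depth bookkeeping go through unchanged. (Also, your appeal to \cref{lem:invertible} is misplaced: that lemma concerns $V\otimes V^\dagger$ on two disjoint copies of a chain and plays no role in subdivision moves.)
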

\begin{proof}
Let $A$ be the subcomplex of $L$ where a local move by $T = T_1$ or $T = T_2$ is happening:
the subcomplex $A$ consists of a single $d$-simplex for $T_1$ 
(and of course all its faces)
or of a union of all cones over a $(d-1)$-simplex for $T_2$.
Let $L^A$ be the simplicial complex obtained from $L$ by applying the local move on $A$.

Put $C = \sum_{\Delta \in L} a_\Delta \Delta$ where $a_\Delta \in \ZZ$,
and $C|_A = \sum_{\Delta \in A} a_\Delta \Delta$.
Each $\Delta^k \in L$ is canonically mapped into a chain $D(\Delta^k)$ of $L^A$ 
as the sum of all $k$-simplices of $L^A$ that are contained in $\Delta$
where the orientation of simplices of $L^A$ is such that
the homological boundary map $\partial$ and $D$ commute.
Define the chain $C^A$ of $L^A$ as the canonical image of $C$.

Let us remark that a simplicial chain 
$\sum_\Delta a_\Delta \Delta$
for a cochain circuit specifies 
ordered tuples of vertices $\{\Delta_j\}$ of $\Delta$ and coefficients $a_\Delta \in \ZZ/n$.
See \cref{def:cochaincircuit}.
Hence, we may speak of a cochain circuit 
on \emph{a formal linear combination of ordered tuples of vertices},
regardless of whether the ordered tuples of vertices are simplices of
an ambient simplicial complex.
We take this view for the rest of this proof.

Then, the difference $V(A)$ in the generating circuits, 
one by $\omega$ on $C$ and the other by $\omega$ on $C^A$,
is the product of a cochain circuit by $\omega$ on $-C|_A$ and that on $C^A|_{T(A)}$; 
in other words, $V(A)$ is a cochain circuit by $\omega$ on $C^A|_{T(A)} - C|_A$.

Regardless of whether we consider $T_1$ or $T_2$,
the homological boundary $\partial (C|_A) \bmod n$ is supported on 
the topological boundary of $|A|$ in $|L| = |L^A|$;
otherwise, the $\bmod n$ homological boundary of $C$ would not be empty.
But in either move, $\partial (C|_A) = \partial (C^A|_{T(A)}) \bmod n$ 
since no subdivision is made on the topological boundary of $|A|$.
Hence, $V(A)$ is on a chain that is homologically closed modulo $n$ 
(in an abstract simplicial complex%
\footnote{
This is actually a $k$-sphere. 
This fact is used, rather obviously, in the proof of \cref{lem:RG-GDS}.
} 
defined by all simplices that appear in
the formal linear combination $C^A|_{T(A)} - C|_A$ 
of ordered tuples of vertices),
and \cref{lem:cyclecomm} implies that $V(A)$ is $G$-symmetric.

The transition from $C$ to $C'$ is induced 
by rounds of moves
on nonintersecting subcomplexes, where the number of rounds depends on $R$ only.
This means that the promised quantum circuit consists of the local gates of form $V(A)$.
\end{proof}

\begin{lemma}\label{lem:RG-GDS}
Let $G = \ZZ_2 \times \ZZ_2^T$.
Suppose that the link of any $(d-1)$-simplex consists of two vertices.
Then, there is a locally $G$-symmetric quantum circuit on $L'$ of depth $\le$ a finite function of $R$,
under which the GDS dual state on $L$ tensored with $\ket +$ on all the barycenters of nonzero dimensional simplices 
is mapped to the GDS dual state on $L'$.
\end{lemma}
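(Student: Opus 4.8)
The plan is to build the disentangling circuit as a telescoping product of the \emph{local} changes in the GDS dual disentangler along the sequence of moves $T_1$ and $T_2$ of \cref{sec:RG}, and to show that each such change is, by itself, a $G$-symmetric gate on $O(R)$ qubits. Recall that applying $T_1$ to every $d$-simplex and then $T_2$ to every $(d-1)$-simplex turns $L$ into its barycentric subdivision $L'$, and that the new vertices are precisely the barycenters on which we tensor $\ket +$. Recall also that $\Udis = \prod_\Delta C^{\dim \Delta} Z_\Delta$, so the GDS dual state on any complex $K$ is $\Udis^{(K)} \ket +^{\otimes V(K)}$, where $V(K)$ is the vertex set and $\Udis^{(K)}$ ranges only over simplices of $K$. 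Since $\Udis^{(L)}$ ignores the new vertices, the left-hand state equals $\Udis^{(L)} \ket +^{\otimes V(L')}$, so the circuit I must realize is $W = \Udis^{(L')} \Udis^{(L)}$, which indeed sends it to $\Udis^{(L')} \ket +^{\otimes V(L')}$.

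First I would insert the intermediate complexes $L = L_0 \to L_1 \to \cdots \to L_N = L'$ produced one move at a time. Because every $\Udis^{(L_i)}$ is a real $\pm 1$ diagonal operator (acting as the identity on vertices not yet created), the operators $\Udis^{(L_i)}$ pairwise commute and square to the identity; hence $W$ telescopes into the commuting product $W = \prod_i V(A_i)$ with $V(A_i) = \Udis^{(L_{i+1})} \Udis^{(L_i)}$. Since $L_i$ and $L_{i+1}$ agree outside the $i$-th move region, the $C^{\dim}Z$ gates there cancel, so $V(A_i)$ is supported on the local region of the move --- a single $d$-simplex together with its barycenter for $T_1$, or the closed star of a $(d-1)$-simplex together with its introduced barycenters for $T_2$ --- and thus touches $O(R)$ qubits. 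Being real, it automatically commutes with the time-reversal generator $\ZZ_2^T$ (complex conjugation). Unlike \cref{lem:RG-cohomology}, where a fixed chain lets one invoke \cref{lem:cyclecomm}, here $\Udis$ is not a single cochain circuit by one $\omega_k$, so I would verify symmetry directly.

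The crux is to show each $V(A_i)$ also commutes with the global flip $\ZZ_2$. Writing the phase of $\Udis$ on a configuration as $(-1)$ to the number of simplices all of whose vertices are ``down,'' the phase of $V(A_i)$ is the parity of the number of newly created interior simplices that are entirely down minus the number of removed interior simplices that are entirely down. Flipping every spin in the support exchanges ``down'' and ``up.'' For $T_1$ on $\{v_0,\ldots,v_d\}$ with barycenter $b$, the removed interior simplex is $\Delta^d$ and the new ones are $\{b\}\cup\tau$ over proper faces $\tau$; a short case analysis on the down-set $D\subseteq\{v_0,\ldots,v_d\}$ and the spin of $b$ shows the parity is invariant under $D\mapsto D^c,\ b\mapsto\bar b$, using only $2^{|D|}\equiv 0 \bmod 2$ for $|D|\ge 1$ and the count $2^{d+1}-1$ in the all-down case. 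The same bookkeeping applies to $T_2$: the two-vertex-link hypothesis is exactly what makes the closed star of a $(d-1)$-simplex a $d$-ball of two $d$-simplices glued along their common facet, so that $T_2$ is a subdivision of a ball and the interior-simplex count transforms under the flip in the same controlled way.

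Finally, since all $V(A_i)$ are diagonal they pairwise commute, and since every vertex lies in at most $R$ simplices each $V(A_i)$ overlaps only boundedly many others; coloring the bounded-degree overlap graph exhibits $W$ as a $G$-symmetric circuit of depth bounded by a function of $R$. I expect the symmetry verification of the previous paragraph to be the main obstacle: conceptually it holds because barycentric subdivision preserves Euler characteristics while $\Udis$ computes $(-1)^{\chi(\text{down subcomplex})}$, so the anomalous sign $(-1)^{\chi}$ of \cref{lem:chiflip} is identical before and after every move and cancels in the commutator with the flip. The work is in localizing this cancellation move by move and in checking that the two-vertex-link hypothesis already suffices for the $T_2$ regions, so that one does not need the full combinatorial-manifold property invoked in \cref{lem:chiflip}.
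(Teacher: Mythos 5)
Your skeleton --- telescoping $\Udis^{(L')}\Udis^{(L)}$ into per-move differences $V(A_i)$, observing that each $V(A_i)$ is real and hence time-reversal symmetric, localizing it to the move region, and extracting the depth bound from the bounded overlap of those regions --- is the same as the paper's, and your explicit parity count for $T_1$ is correct: the gate is $-1$ exactly on the two flip-related configurations ($b$ up with all $v_i$ down, and $b$ down with all $v_i$ up), hence flip-symmetric. The genuine gap is the sentence ``the same bookkeeping applies to $T_2$.'' It does not. For $T_1$ every added gate lies on a cone $\{b\}\cup\tau$ over a face of a single simplex, so the all-down counts are $2^{|D|}$ or $2^{d+1}-1$ and a two-case analysis closes. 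The $T_2$ replacement, by contrast, contains the entire barycentric subdivision $B$ of $\Delta^{d-1}$ together with its cones over the two apexes; the barycenters of all faces of $\Delta^{d-1}$ carry independent spins, and the number of simplices of $B$ contained in an arbitrary down-set of these exponentially many vertices has no power-of-two closed form --- the induced down-subcomplex of $B$ can be essentially arbitrary. So the short case analysis that worked for $T_1$ has no analogue for $T_2$; this is exactly the step you yourself flagged as the main obstacle, and it is where the proof is incomplete.

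The paper closes both cases at once with no configuration counting: since the removed complex $A$ and the inserted complex $T(A)$ triangulate the same $d$-ball and carry the same simplicial complex on its boundary, gluing them along that common boundary yields a combinatorial $d$-sphere $S$, and the difference circuit $V(A)$ is the GDS dual generating circuit on $S$ with the gates on the equatorial $(d-1)$-sphere cancelling in pairs (each equator gate appears once from $A$ and once from $T(A)$, and squares to the identity). \cref{lem:chiflip} --- whose elementary proof needs only that every link has an even number of simplices, which holds in any combinatorial sphere --- gives $X V X V = (-1)^{\chi(S^d)} = +1$ for the sphere circuit and likewise $(-1)^{\chi(S^{d-1})} = +1$ for the equator circuit, so each $V(A_i)$ commutes with the global flip; realness then gives the full $\ZZ_2 \times \ZZ_2^T$ symmetry. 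This is the rigorous, localized version of your own closing remark that the anomalous sign $(-1)^{\chi}$ of \cref{lem:chiflip} is unchanged by subdivision: the localization you were looking for is not a move-by-move tally of all-down simplices, but the gluing of $A$ and $T(A)$ into a sphere, after which the two-vertex-link hypothesis is needed only to ensure that the closed star being replaced is indeed a ball so that the glued object is a sphere.
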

\begin{proof}
The proof is parallel to that of \cref{lem:RG-cohomology}.
Using the same notation there, 
let us first consider a move $T_1$ on a subcomplex $A_1$ 
that has only one $d$-simplex $\Delta^d$.
Then, the difference $V(A_1)$ in the dual GDS circuits
is the product of (i) $C^{\le d}Z$ on all simplices of $A_1$
and (ii) $C^{\le d} Z$ on all simplices of $T_1(\Delta^d)$.
Since no subdivision is made on the boundary of $\Delta^d$ by $T_1$,
the gates of form $C^{<d} Z$ on the boundary of $\Delta^d$ cancel.
Thus, $V(A_1)$ is equal to the product of (1) the dual GDS circuit on 
a $d$-sphere whose northern hemisphere is triangulated by $A_1$ 
and the southern hemisphere by $T_1(A_1)$, 
and (2) the dual GDS circuit on the equator of the $d$-sphere.
Since the boundary of $A_1$ matches that of $T_1(A_1)$ as noted earlier,
the two triangulations of the northern and southern hemispheres agree along the equator.
Now, \cref{lem:chiflip} says that the GDS dual circuit is $X$- and TR-symmetric on any sphere,
and $V(A_1)$ is symmetric as desired.

For a move $T_2$, we have a subcomplex $A_2$ that includes all cones 
over a given $(d-1)$-simplex $\Delta^{d-1}$ of $L$.
The assumption that the link of every $(d-1)$-simplex within $L$ consists of two vertices,
implies that after all $T_1$ moves for all $d$-simplices of $L$,
there are exactly two $d$-simplices in $A_2$.
Hence, the space $A_2$ glued along its boundary with $T_2(A_2)$ is a $d$-sphere.
The factorization of $V(A_2)$ as two dual GDS circuits, 
one on the $d$-sphere and the other on its equator, is again valid.
Therefore, $V(A_2)$ is symmetric by \cref{lem:chiflip}.

The promised circuit consists of gates of form $V(A)$ for some subcomplex $A$.
\end{proof}

\section{GDS on a Combinatorial Manifold}
\label{combmanifoldsection}
A combinatorial manifold is a simplicial complex where the link of any simplex is homeomorphic to a sphere,
which implies that the underlying topological space is indeed a manifold.
As we have seen earlier, the dual GDS state on this simplicial complex is created from the all $\ket +$ state
by $Z,CZ,CCZ,\ldots, C^dZ$ gates on every simplex; 
$Z$ on every qubit, $CZ$ on every edge, $CCZ$ on every triangle, etc.
In particular, the circuit that disentangles the dual GDS state
satisfies the assumptions of \cref{lem:allcohomology}.

Therefore, the state can be thought of as a product of cohomology states.
It remains to be determined where the cohomology states live, 
especially those that are created by lower dimensional gates than the spatial dimension.
By \cref{lem:moveWithinHomologyclass}, only the homology class of the support of the cohomology state
is important.
We will determine the homology class using the stable equivalence of SPT states;
we will refine the lattice by adding more degrees of freedom and consider the states there.

\subsection{Passing to the barycentric subdivision}

The underlying topological space of a simplicial complex 
is unchanged by taking barycentric subdivision,
and hence on general grounds no properties of a topological model should change;
that is, taking barycentric subdivision should be unessential.
Indeed, we have shown that the GDS dual state on a combinatorial manifold
can be mapped to that on the barycentric subdivision by a circuit that is locally symmetric 
under both the global $\ZZ_2$ spin flip and time reversal.
However, it makes the technical manipulation far easier
to take the first barycentric subdivision.

The technical convenience of passing to the barycentric subdivision
comes from the following facts that are stated in lemmas below.
Let us partially order the vertices $(\Delta)$ of the barycentric subdivision $L'$ of $L$
by the face-relation: $(\sigma) \le (\tau)$ iff $\sigma$ is a face of $\tau$.
This partial order gives a total order among vertices of a simplex of $L'$,
and hence defines the orientation of each simplex.
Recall that a simplex of $L'$ is labeled by a sequence 
$(\sigma_0 < \ldots < \sigma_k)$
of simplices of $L$.
We define an integral simplicial chain $C_k$ of the barycentric subdivision $L'$ for each $k = 0,1,\ldots,d$ by
\begin{align}
C_k = \sum_{\sigma_0 < \cdots < \sigma_k} (-1)^{|\sigma_0| + \cdots + |\sigma_k|} (\sigma_0 < \cdots < \sigma_k)
\end{align}
where the sum ranges over all $k$-simplices of $L'$.
\begin{lemma}[\cite{HalperinToledo}]\label{lem:HalperinToledo}
Suppose that $L$ is a combinatorial $d$-manifold. Then, for all $k = 1,\ldots,d$
\begin{align}
\partial C_k = \chi(S^{d-k}) C_{k-1}
\end{align}
where $\chi(S^k)$ is the Euler characteristic of the $k$-sphere.
Moreover, the chain $C_k \bmod 2$ represents the $(\bmod 2)$ Stiefel-Whitney homology class for $k = 0,\ldots, d$.\footnote{Recall that the $k$th Stiefel-Whitney homology class is the Poincare dual of $w_{d-k}$, the $(d-k)$th Stiefel-Whitney cohomology class.}
Here the $d$-th Stiefel-Whitney homology class is the $\ZZ_2$-fundamental homology class.
\end{lemma}

\begin{lemma}\label{lem:cochainEqualsGDS}
Suppose the link of any simplex in $L$ has even Euler characteristic.
For any $k = 0,1,\ldots,d$,
the cochain circuit by $\omega_k$ of \cref{eq:omegak} on $C_k$,
is equal to the product of $C^k Z$ gates over all $k$-simplices of $L'$.
\end{lemma}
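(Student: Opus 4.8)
The plan is to compare the two diagonal $\pm 1$ operators directly as phase polynomials over $\FF_2$ and show their ratio is the identity. First I would dispose of the signs in $C_k$: since $\omega_k$ is valued in $\{0,1/2\} = \ZZ/2 \subset \RR/\ZZ$ and each coefficient of $C_k$ is $\pm 1$, one has $U_{\pm\omega_k,\Delta} = U_{\omega_k,\Delta}$, so the cochain circuit by $\omega_k$ on $C_k$ is simply the unsigned product $\prod_{\Delta^k \in L'} U_{\omega_k,\Delta^k}$ over all $k$-simplices, with the vertex order supplied by the face-relation partial order $(\sigma_0) < \cdots < (\sigma_k)$. Identifying $\ket e \leftrightarrow \ket\uparrow$ and $\ket X \leftrightarrow \ket\downarrow$ and writing $x_v \in \{0,1\}$ for the state of vertex $v$, \cref{eq:omegak} gives $U_{\omega_k,\Delta} = \exp[\,i\pi\, x_{v_0}(1+x_{v_1})x_{v_2}(1+x_{v_3})\cdots]$ (the alternating pattern forced by the fixed first argument $e$), while $C^kZ_\Delta = \exp[\,i\pi\, x_{v_0}x_{v_1}\cdots x_{v_k}]$.

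Next I would form the difference. In each single-simplex phase the top monomial $x_{v_0}\cdots x_{v_k}$ coincides, so $U_{\omega_k,\Delta}\,C^kZ_\Delta$ has phase $p_\Delta = \sum_{S\subsetneq O(\Delta)} \prod_{v \in E(\Delta)\cup S} x_v$, where $E(\Delta)$ and $O(\Delta)$ are the even- and odd-position vertices of $\Delta$; every monomial of $p_\Delta$ is supported on a proper face of $\Delta$ containing all of $E(\Delta)$. Since the total product is $\exp[\,i\pi\sum_{\Delta^k} p_\Delta]$, it suffices to prove $\sum_{\Delta^k} p_\Delta = 0$ in $\FF_2$, i.e. that for each $j$-simplex $\tau$ with $j<k$ the number of $k$-simplices $\Delta$ of $L'$ with $E(\Delta)\subseteq\tau\subseteq\Delta$ is even.

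The combinatorial heart is this count, and here I would exploit the barycentric structure. Writing $\tau = (\rho_0<\cdots<\rho_j)$, a $k$-simplex $\Delta\supseteq\tau$ is a flag extending $\tau$ by inserting simplices of $L$ into the $j+2$ "gaps": the open poset intervals $(\rho_{i-1},\rho_i)$ for $i=1,\ldots,j$, together with the proper faces of $\rho_0$ (gap $0$) and the proper cofaces of $\rho_j$ (gap $j+1$). Tracking positions shows that the constraint $E(\Delta)\subseteq\tau$ (all inserted vertices at odd positions) forces at most one insertion per gap, so the count becomes a sum over admissible gap-patterns of a product of gap cardinalities. The number of available simplices in each interior gap is $2^{\dim\rho_i-\dim\rho_{i-1}}-2$ and in gap $0$ is $2^{\dim\rho_0+1}-2$, both even because proper nonempty subsets of a finite set come in even number; the number in the top gap is the number of proper cofaces of $\rho_j$, which equals $\sum_i f_i(\mathrm{lk}_L(\rho_j)) \equiv \sum_i (-1)^i f_i = \chi(\mathrm{lk}_L(\rho_j)) \pmod 2$, and this is even by hypothesis. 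Since $j<k$, at least one gap receives an insertion, so every contributing $\Delta$ carries an even multiplicative factor; hence the total count is even and $\prod_{\Delta^k}U_{\omega_k,\Delta} = \prod_{\Delta^k}C^kZ_\Delta$.

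The step I expect to be the main obstacle is the position bookkeeping that reduces the count to ``at most one insertion per gap'' and identifies each gap's contribution with an even cardinality; in particular one must verify that the top gap is the only place where the manifold hypothesis is used, the lower gaps being even for the purely algebraic reason above. Some care is also needed to confirm that the identification $\ket X \leftrightarrow \ket\downarrow$ is globally consistent with the orientation convention, so that the alternating pattern of \cref{eq:omegak} lines up with the face order on $L'$.
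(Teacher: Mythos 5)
Your proof is correct and follows essentially the same route as the paper's: expand the difference of the cochain-gate phase polynomial and the $C^kZ$ monomial into lower-degree terms, reduce to the parity of the number of flags $\Delta \supseteq \tau$ whose inserted simplices all sit at odd positions, and kill that count using the evenness of the number of simplices strictly between two nested faces (the boundary sphere of a simplex) together with the even-Euler-characteristic hypothesis applied to the cofaces of the top simplex. The only difference is bookkeeping: the paper partitions the relevant flags by fixing all omitted simplices except the first one and letting it range over a link, while you sum over gap patterns of products of gap cardinalities --- but both arguments hinge on exactly the same two even-cardinality facts.
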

In Ref.~\cite{Sullivan} the local even Euler characteristic condition was used to show
that $C_k$ is closed modulo $2$.
\begin{proof}
The cochain gate by $\omega_k$ on a simplex $\Delta^k = (\sigma_0< \cdots <\sigma_k) \in L'$
multiplies a phase factor $(-1)^p$ with $p = x_0(x_1+1)x_2 \cdots (x_k \text{ or } x_k + 1)$
where $x_j = 0,1$ are binary variables representing the state of the qubits in $\Delta^k$.
The last factor in $p$ is $x_k$ if $k$ is even or $x_k +1$ if $k$ is odd.
A gate $C^k Z$ on a $k$-simplex multiplies a phase factor $(-1)^{x_0 x_1 \cdots x_k}$.
Hence, for $k = 0$ the claim is obvious. From now on, $k > 0$.

The phase difference $p - (x_0 x_1 \cdots x_k)$ is a sum of monomials,
each of which represents some $C^{< k}Z$ gate on specific faces of $\Delta^k$.
These faces have labels that are obtained by omitting one or more $\sigma_j$ with odd $j$ 
from $(\sigma_0 < \cdots < \sigma_k)$.
Thus, the number of $C^{k'}Z$ gates where $\lfloor k/2 \rfloor \le k' < k$ 
that are applied to a $k'$-simplex $\Delta^{k'} = (\tau_0 < \cdots < \tau_{k'})$
is the number $N(\Delta^{k'})$ of $k$-simplices of form $(\sigma_0 < \cdots < \sigma_k)$
where the sequence $(\tau_0 < \cdots < \tau_{k'})$ 
is obtained from $(\sigma_0< \cdots <\sigma_k)$
by omitting one or more $\sigma_j$ with odd $j$.
We claim that $N(\Delta^{k'})$ is even for all $\Delta^{k'}$,
which will conclude the proof.

We count $N(\Delta^{k'})$ by partitioning the collection of all the relevant $k$-simplices
by fixing all omitted $\sigma_j$ but the first $\sigma_{j_0}$.
For example, if $k'=3$, $k=5$, and
$\Delta^{3} = (\sigma_0 < \sigma_2 < \sigma_3 < \sigma_4)$
in a 6-dimensional simplicial complex,
then we partition $5$-simplices $(\sigma_0 < \sigma_1 < \sigma_2 < \sigma_3 < \sigma_4 < \sigma_5)$
according to $\sigma_5$ so that in each subcollection the $5$-simplices differ in $\sigma_{j_0} = \sigma_1$ only.
Then, each subcollection of $k$-simplices is identified with the link of 
$\sigma_{j_0 - 1}$ within the simplicial complex $\partial \sigma_{j_0 + 1}$;
if $j_0 = k$ then the link is taken in $L$.

Now, if $j_0 < k$, then the link of $\sigma_{j_0 - 1}$ within the sphere $\partial \sigma_{j_0 + 1}$
is always a lower dimensional sphere, which has an even number of simplices.
If $j_0 = k$, then
the even Euler characteristic assumption implies that 
the link of $\sigma_{k-1}$ has an even number of simplices.
\end{proof}

\subsection{With Time Reversal Symmetry}
With time reversal symmetry, our main result is the following:
\begin{corollary}\label{thm:GDSwTR}
Let $G = \ZZ_2 \times \ZZ_2^T$ be the global symmetry group.
The GDS dual disentangler on a closed combinatorial $d$-manifold $L$
is stably $G$-equivalent to the product $V_0 \cdots V_d$
of cochain circuits $V_k$ by $\omega_k$ on the Stiefel-Whitney homology class
for $k=0,1,\ldots,d$.

The product $V_0 V_1\cdots V_d$ is stably $G$-equivalent to the tensor product 
$V_0 \otimes V_1 \otimes \cdots \otimes V_d$ acting on $d+1$ copies of $L$.
\end{corollary}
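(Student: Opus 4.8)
The plan is to establish the two assertions of the corollary in turn: first that the GDS dual disentangler $\Udis$ is stably $G$-equivalent to the product $V_0 \cdots V_d$ of cochain circuits on the Stiefel--Whitney classes, and then that this product is stably $G$-equivalent to the corresponding tensor product. For the first assertion, the preamble of this section already records that on $\cellfixed$ the disentangler $\Udis$ equals the product over all simplices of $C^kZ$ gates, so it meets the hypotheses of \cref{lem:allcohomology} and can therefore be written as a product of cochain circuits by $\omega_k$ on \emph{some} $\ZZ_2$-closed $k$-chains. The genuine content that remains is to identify the homology classes of these chains, and this is precisely what the passage to the barycentric subdivision accomplishes.

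Concretely, I would first invoke \cref{lem:RG-GDS} with $G = \ZZ_2 \times \ZZ_2^T$. Its hypothesis holds because in a combinatorial $d$-manifold the link of any $(d-1)$-simplex is a $0$-sphere, i.e.\ exactly two vertices; hence there is a locally $G$-symmetric circuit carrying the GDS dual state on $\cellfixed$ (with $\ket{+}$ on all new barycenters) to the GDS dual state on the barycentric subdivision $L'$. Since each of these states is the image of the fully symmetric product state under the respective disentangler, this is a stable $G$-equivalence between $\Udis$ on $\cellfixed$ and $\Udis$ on $L'$, so it suffices to analyze $\Udis$ on $L'$. On $L'$ I would apply \cref{lem:cochainEqualsGDS} for each $k$; its local even-Euler-characteristic hypothesis is automatic for a combinatorial manifold, since every link is a sphere $S^m$ with $\chi(S^m) = 1+(-1)^m \in \{0,2\}$. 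The lemma identifies the product of all $C^kZ$ gates on $k$-simplices of $L'$ with the cochain circuit by $\omega_k$ on the chain $C_k$, so taking the product over $k$ gives $\Udis = V_0 \cdots V_d$ with $V_k$ the $\omega_k$-circuit on $C_k$ (the integer signs in $C_k$ are irrelevant because $\omega_k$ is $\ZZ_2$-valued). Finally \cref{lem:HalperinToledo} shows $C_k \bmod 2$ represents the Stiefel--Whitney homology class, completing the first assertion.

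For the second assertion I would separate the product into a tensor product using invertibility. Introduce $d$ ancillary copies of the complex, one for each $k = 1,\ldots,d$, each in the $\ket{+}$ state, and keep $V_0$ on the original copy. Each $V_k$ is $G$-symmetric (it commutes with the flip by \cref{lem:cyclecomm}, as noted after \cref{lem:allcohomology}) and is an involution because $\omega_k$ is valued in $\ZZ/2$, so $V_k = V_k^\dagger$. By \cref{lem:invertible}, applied to $C_k \sqcup C_k$ realized as the chain $C_k$ in the original copy together with the chain $C_k$ in the $k$-th ancillary copy, the unitary $V_k \otimes V_k^\dagger = V_k \otimes V_k$ is a product of $G$-symmetric gates on $2k+2$ qudits. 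I would then examine the comparison unitary $(V_0 \cdots V_d)^\dagger (V_0 \otimes \cdots \otimes V_d)$: because every factor is diagonal and involutive, the original-copy and copy-$k$ factors combine into $\prod_{k=1}^d (V_k \otimes V_k)$ while the $k=0$ factor cancels, and this is a $G$-invariant circuit of depth bounded by a function of $d$. This is exactly a stable $G$-equivalence.

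I expect the main obstacle to lie entirely in the first assertion, namely in pinning down that the closed chains produced by \cref{lem:allcohomology} are the Stiefel--Whitney classes; that lemma alone guarantees only \emph{some} closed chain, with its homology class a priori unconstrained, and the whole purpose of the barycentric subdivision is to render the gate structure explicit enough for \cref{lem:cochainEqualsGDS} and \cref{lem:HalperinToledo} to apply. The delicate bookkeeping is to verify that the RG map of \cref{lem:RG-GDS} is simultaneously symmetric under the global $\ZZ_2$ flip and under time reversal---which rests on \cref{lem:chiflip} ensuring that the GDS dual circuit is $X$- and TR-symmetric on every sphere---so that the entire chain of equivalences respects the full group $G = \ZZ_2 \times \ZZ_2^T$.
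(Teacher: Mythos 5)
Your proposal is correct and follows essentially the same route as the paper's proof: pass to the barycentric subdivision via \cref{lem:RG-GDS}, identify the disentangler with the $\omega_k$-circuits on $C_k$ via \cref{lem:cochainEqualsGDS}, invoke \cref{lem:HalperinToledo} for the Stiefel--Whitney classes, and derive the tensor-product claim from \cref{lem:invertible}. The only cosmetic difference is that the paper additionally cites \cref{lem:RG-cohomology} and \cref{lem:moveWithinHomologyclass} to phrase the reduction as a ``without loss of generality'' over homologous representatives, whereas you keep the chains on $L'$ directly---immaterial since stable equivalence permits refinements---and you verify the lemmas' hypotheses (links of $(d-1)$-simplices being $S^0$, even Euler characteristic of links) and the \cref{lem:invertible} bookkeeping more explicitly than the paper does.
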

\begin{proof}
By \cref{lem:RG-GDS,lem:RG-cohomology} we may assume without loss of generality
that $L$ is a barycentric subdivision of some other combinatorial manifold.
By \cref{lem:cochainEqualsGDS} we can regard the circuit as a product of cochain circuits by $\omega_k$
on $C_k$, but \cref{lem:HalperinToledo} says $C_k$ are representatives of Stiefel-Whitney classes.
The very representative is unimportant because of \cref{lem:moveWithinHomologyclass}.

The second claim follows from \cref{lem:invertible} since $C_k$ are closed modulo $2$.
\end{proof}

\begin{remark}
This lemma gives a way to identify Stiefel-Whitney clasess of a
of a combinatorial manifold without passing to the barycentric subdivision.  \cref{lem:cochainEqualsGDS} gives some product of cochain circuits by $\omega_k$
on $C_k$ and then the $C_k$ will be representantives of the Stiefel-Whitney classes.  We expect that this will reproduce the formula of \cite{goldstein1976formula}.
\end{remark}

Since the circuit $V_d$ for odd $d$ is believed to create a symmetry protected phase of time reversal and $\ZZ_2$ symmetry, this suggests that the GDS in odd spatial dimensions is
not equivalent to the toric code up to a local time reversal invariant circuit, i.e., that it is a symmetry enriched topological phase.
We now consider whether it is possible to remove $V_k$ for odd $k<d$.

Let us consider the example $L={\mathbb{R}}P^2$.
The GDS dual ground state consists of a stack of $0$-, $1$-, and $2$-dimensional cohomology states.
The $0$-dimensional state is just a $\ZZ_2$ charge, 
and certainly cannot be removed with a symmetric local circuit.
However, an interesting question is whether the $1$-dimensional state, 
which is the Haldane phase of $\ZZ_2 \times \ZZ_2^T$ symmetry,
can be removed with a constant depth local circuit of $\ZZ_2 \times \ZZ_2^T$-symmetric local unitaries.
We argue that it cannot:

\begin{claim}\label{physargument1}
Under the physical assumption that 
there are no nontrivial invertible phases for bosons in $1d$ 
and that all $1d$ $\ZZ_2 \times \ZZ_2^T$ SPT phases have $\ZZ_2$ fusion rules 
(i.e. two copies of any such phase is equivalent to the trivial phase),
the Haldane phase of $\ZZ_2 \times \ZZ_2^T$ symmetry 
cannot be removed from the non-trivial cycle of ${\mathbb{R}}P^2$ 
with a constant depth local circuit of $\ZZ_2 \times \ZZ_2^T$-symmetric local unitaries.
\end{claim}

We proceed as follows: think of ${\mathbb{R}}P^2$ as a M\"obius band glued to a disc along their common boundary.
The Haldane phase runs along the $S^1$ at the center of the M\"obius band.
Any constant depth circuit can be written as a circuit on the disc composed with a circuit on the M\"obius band,
up to a thickening of these regions of order the range (Lieb-Robinson length) of the circuit.
A circuit acting on the disc can only produce an invertible $\ZZ_2 \times \ZZ_2^T$-symmetric phase
along the disc's boundary.
It is believed that there are no nontrivial invertible phases for bosons in 
$1d$~\cite{Kitaev_pc, Freed, FreedHopkins, GaiottoFreyd},
so this must actually be an SPT phase of $\ZZ_2 \times \ZZ_2^T$.
But all such SPT phases have $\ZZ_2$ fusion rules~\cite{Wen,Kapustin_cobordism},
so when one collapses the M\"obius band onto $S^1$,
the $S^1$ is in a trivial SPT phase.
No local symmetric circuit acting on just the M\"obius band can change this fact,
so the Haldane phase on the center $S^1$ cannot be removed, as desired.
Below we will more carefully explain a higher dimensional version of this argument 
for the case without time reversal symmetry.

\subsection{Without Time Reversal Symmetry}
Without time reversal symmetry, our main result is the following:
\begin{corollary}\label{thm:GDSwoTR}
Let $G = \ZZ_2$ be the global symmetry group.
The GDS dual disentangler on a combinatorial $d$-manifold $L$ 
is stably $G$-equivalent to the identity if $d$ is odd, 
or the product $V_0 V_2 \cdots V_d$
of cochain circuits $V_k$ by $\omega_k$ on the Stiefel-Whitney homology classes of $L$ if $d$ is even.

When $d$ is even, the product $V_0 V_2 \cdots V_d$ is stably $G$-equivalent to the tensor product 
$V_0 \otimes V_2 \otimes \cdots V_d$ acting on $\frac d 2 + 1$ copies of $L$.
\end{corollary}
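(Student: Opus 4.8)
The plan is to deduce this from \cref{thm:GDSwTR} by exploiting the fact that dropping time reversal enlarges the class of admissible circuits. Since every gate invariant under $\ZZ_2 \times \ZZ_2^T$ is in particular invariant under $\ZZ_2$, the $\ZZ_2 \times \ZZ_2^T$ equivalence of \cref{thm:GDSwTR} immediately gives, under $G = \ZZ_2$ alone, that $\Udis$ is stably $G$-equivalent to $V_0 V_1 \cdots V_d$. The new ingredient is that for $G = \ZZ_2$ acting trivially on the coefficients $\RR/\ZZ$, the cocycle $\omega_k$ of \cref{eq:omegak} becomes cohomologically trivial precisely when $k$ is odd: the $\ZZ_2$ SPT classification in the relevant degree, $H^{k+1}(\ZZ_2;\RR/\ZZ)$, vanishes for $k+1$ even. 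Hence for odd $k$ we may write $\omega_k = \delta \lambda_k$ for a homogeneous $\RR/\ZZ$-valued primitive $\lambda_k$, which is exactly the hypothesis of \cref{lem:coboudarylocalsymmetric}.

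First I would treat even $d$. By \cref{lem:HalperinToledo}, $\partial C_k = \chi(S^{d-k}) C_{k-1}$, and for even $d$ and odd $k$ the sphere $S^{d-k}$ is odd-dimensional, so $\chi(S^{d-k}) = 0$ and $C_k$ is an \emph{integral} cycle. The coboundary decomposition of \cref{lem:coboudarylocalsymmetric} then applies with no residual boundary term, so each $V_k$ with $k$ odd is a product of $G$-symmetric gates and may be dropped, leaving precisely $V_0 V_2 \cdots V_d$. For the tensor-product statement, each surviving $C_k$ is closed mod $2$, so \cref{lem:invertible} lets me peel off a symmetric $V_k \otimes V_k^\dagger$ on a disjoint copy and thereby convert the ordered product into the tensor product $V_0 \otimes V_2 \otimes \cdots \otimes V_d$; the even values $k = 0, 2, \ldots, d$ number $\tfrac{d}{2} + 1$, accounting for the copies of $L$.

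The case of odd $d$ is where the argument genuinely differs, and I expect it to be the main obstacle. Now for odd $k$ the dimension $d-k$ is even, so $\chi(S^{d-k}) = 2$ and $\partial C_k = 2 C_{k-1} \neq 0$ in general; the coboundary reduction of $V_k$ no longer terminates but leaves a residual cochain circuit supported on $\partial C_k = 2 C_{k-1}$, i.e.\ a circuit by $2\lambda_k'$ on the even-index chain $C_{k-1}$. The key claim I would establish is that this residue is $G$-equivalent to $V_{k-1}$, i.e.\ that the cochain $2\lambda_k'$ is cohomologous to $\omega_{k-1}$; granting this, each odd-$k$ factor cancels its even neighbor, $V_{k-1} V_k \sim_G I$. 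Since odd $d$ makes the number of factors $d+1$ even, they pair up completely as $(V_0,V_1),(V_2,V_3),\ldots,(V_{d-1},V_d)$ and the whole product collapses to the identity. This also explains why, in contrast to even $d$, the even-index cohomology terms do not survive: their removal is forced by the residues of the odd-index terms rather than being independent.

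The crux is the explicit cohomological identification $2\lambda_k' \simeq \omega_{k-1}$ relating the $\RR/\ZZ$-primitive of $\omega_k$ to the lower cocycle; this encodes the Bockstein relation between consecutive generators of $H^\ast(\ZZ_2;\RR/\ZZ)$ and $H^\ast(\ZZ_2;\ZZ_2)$, and carrying it out with the explicit representatives of \cref{eq:omegak} is the one place where real computation is needed. As an independent consistency check for odd $d$, one may instead invoke \cite{GDS}, where the GDS is shown equivalent to the toric code by a local circuit built from complex (hence time-reversal-breaking but $\ZZ_2$-symmetric) gates; dualizing that circuit exhibits $\Udis$ as stably $\ZZ_2$-equivalent to the identity directly, in agreement with the pairing argument above.
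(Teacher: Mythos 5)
Your proposal is correct and follows essentially the same route as the paper's proof: reduce to \cref{thm:GDSwTR}; use that $\omega_k$ is an $\RR/\ZZ$-coboundary for odd $k$ together with \cref{lem:HalperinToledo} (so $C_k$ is integrally closed when $d$ is even, while $\partial C_{2k+1}=2C_{2k}$ when $d$ is odd) and \cref{lem:coboudarylocalsymmetric}; and obtain the tensor-product statement from \cref{lem:invertible}. The one step you defer---that $2\lambda_k$ is cohomologous to $\omega_{k-1}$---is exactly what the paper settles by exhibiting the primitive explicitly, $\omega_{2k+1}=\tfrac 1 2 \delta\omega_{2k}$ (a direct computation with the representatives of \cref{eq:omegak}), so that $2\lambda_{2k+1}=\omega_{2k}$ holds on the nose rather than merely up to coboundary.
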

\begin{proof}
We start with \cref{thm:GDSwTR}. 
Note that $\omega_{2k+1} = \frac 1 2 \delta \omega_{2k}$ which can be checked by direct computation.

If $d$ is even, we have to show that $V_{2k+1}$ is $G$-equivalent to the identity.
This follows from the facts that the $\ZZ$-homology chain $C_{2k+1}$ 
is integrally closed (\cref{lem:HalperinToledo}) and  $\omega_{2k+1}$ is trivial in $H^{2k+1}(\ZZ_2;\RR/\ZZ)$.
\cref{lem:coboudarylocalsymmetric} concludes the argument.
The second claim follows from \cref{lem:invertible} since $C_k$ are closed modulo $2$.

If $d$ is odd,
we express the cochain circuit $V_{2k+1}$ using $\frac 1 2 \delta \omega_{2k}$;
as in the proof of \cref{lem:coboudarylocalsymmetric}, this gives a locally $G$-symmetric circuit
and a cochain circuit by $\frac 1 2 \omega_{2k}$ on $\partial C_{2k+1}$.
\cref{lem:HalperinToledo} says $\partial C_{2k+1} = 2 C_{2k}$.
Hence, modulo a locally $G$-symmetric circuit, $V_{2k+1}$ is equal to $V_{2k}$.
Therefore $V_{2k+1} V_{2k}$ is $G$-equivalent to the identity.
\end{proof}

When the combinatorial manifold $L$ has non-zero Stiefel-Whitney classes,
the result of \cref{thm:GDSwoTR} suggests that the GDS dual on $L$ 
is not constant depth circuit-equivalent to just a cohomology model in the top dimension.
We do not prove this rigorously in general, 
but let us discuss the special case $L={\mathbb C}P^2$ ($d=4$).
First, \cref{thm:GDSwoTR} immediately shows that 
the GDS dual ground state on ${\mathbb C}P^2$ has odd $\ZZ_2$ charge~\cite{GDS, Debray2018},
so certainly there can be no symmetric circuit that turns it into the $\ZZ_2$-even ground state of a cohomology model.
However, let us avoid this trivial obstruction by removing the $\ZZ_2$ charge from GDS.
The resulting state is then a 4-dimensional cohomology model stacked with a 2-dimensional cohomology model 
on $S^2 = \mathbb C P^1 \subset \mathbb C P^2$.
Here $S^2$, the $2$-cell in the usual CW complex construction of ${\mathbb C}P^2$, 
is a generator of the second homology group of $\mathbb CP^2$
and is in fact a representative of the Stiefel-Whitney homology class of ${\mathbb C}P^2$.
Let us argue that this state is not equivalent to just a 4-dimensional cohomology model.
Equivalently, we will show:

\begin{claim} \label{physargument2}
Under the physical assumption that there are no non-trivial invertible states 
in $3$-spatial dimensions and no non-trivial $3$-dimensional SPTs 
of unitary onsite $\ZZ_2$ symmetry, 
the $2d$ cohomology state on $S^2 \subset {\mathbb C}P^2$ cannot be disentangled with a symmetric circuit.
\end{claim}
In certain settings, results of this kind are easy to prove.
For example, suppose we instead had $T^4 = T^2 \times T^2$, 
and put a 2d cohomology phase on one of the $T^2$'s.
Then any putative symmetric local circuit that disentangles this state 
must still be local after dimensionally reducing along the other $T^2$ 
(i.e., tensoring all the site Hilbert spaces on each $T^2$ fiber into a single super-site).
This dimensional reduced circuit disentangles a 2d cohomology SPT 
and, after gauging, we get a circuit that maps the double semion to the toric code, which is not possible~\cite{Haah2014invariant}.

In the present case of $S^2 \subset {\mathbb C}P^2$, 
this kind of dimensional reduction is not possible.
Instead, our argument will rely on the following fact: 
puncturing ${\mathbb C}P^2$ by removing a point 
results in the total space of a complex line bundle over $\mathbb C P^1$ with a non-trivial Chern number.
($\mathbb C P^2 \setminus \{pt\}  = \{ [z_0;z_1;z_2] \} \setminus \{ [1;0;0] \} \to \{[z_1;z_2]\}$
is the bundle projection map.)
By removing a small ball around the point,
we can view ${\mathbb C}P^2$ as being made from gluing two components 
$X$ and $Y$ along their common boundary $S^3$
\begin{align}
{\mathbb C}P^2 = X \coprod_{S^3} Y 
\end{align}
where $X$ is a $D^2$ disc bundle over $S^2$ and $Y$ is a $4$-ball.
It will be useful for us to think of the $S^3$ as being slightly thickened, 
i.e., having a collar, whose size is small compared to the size of ${\mathbb C}P^2$ 
but large compared to all microscopic length scales, 
including the range of all constant depth circuits we consider.
Let ${\tilde{X}}$ and ${\tilde{Y}}$ be $X$ and $Y$ thickened by this collar.

Now assume, for a contradiction, that a constant depth circuit $U$ on ${\mathbb C}P^2$, 
made out of $\ZZ_2$-symmetric gates, acts on a trivial product state $\ket +$ 
to create a non-trivial cohomology state on $S^2 \subset X \subset {\mathbb C}P^2$ 
tensored with a trivial product state away from the $S^2$.
We can break up $U = U_{\tilde{X}} U_{\tilde{Y}}$, 
where $U_{\tilde{X}}$ and $U_{\tilde{Y}}$ are symmetric circuits 
acting on ${\tilde{X}}$ and ${\tilde{Y}}$, respectively.
Consider now $\ket \psi = U_{\tilde{Y}}\ket +$,
which must be equal to some state $\ket{\psi}_{\tilde{X}}$ on ${\tilde{X}}$,
tensored with a trivial product state on ${\mathbb{C}}P^2 \setminus {\tilde{X}}$, 
since this is true for $U_{\tilde{X}} \ket \psi = U \ket +$ by assumption, 
and $U_{\tilde{X}}$ acts as the identity on ${\mathbb{C}}P^2 \setminus {\tilde{X}}$.  
But $U_{\tilde X}$ is supported on the disc bundle over $S^2$,
which deformation-retracts onto $S^2$.
So $U_{\tilde X}$ maps $\ket{\psi}_{\tilde X}\otimes |+\rangle_{{\mathbb C}P^2\setminus \tilde X}$ to $U \ket +$,
where the latter is the nontrivial 2d cohomology state on $S^2$.

However, we know that $\ket \psi_{\tilde{X}}$ looks like $\ket \psi = U_{\tilde{Y}} \ket +$, 
i.e., a trivial symmetric product state, on ${\mathbb{C}}P^2 \setminus {\tilde{Y}}$.
Thus $\ket \psi_{\tilde{X}}$ is the tensor product of some state $\ket \psi_{S^3}$ defined in the collar of the $S^3$,
with a trivial product state.  Now, $\ket \psi$ has a symmetric gapped parent Hamiltonian on ${\mathbb C}P^2$ obtained from conjugating the trivial sum of projectors Hamiltonian by $U_{\tilde{Y}}$.  Since this parent Hamiltonian stabilizes a trivial symmetric product state away from the collar, it can be deformed, in the space of symmetric gapped Hamiltonians, to a trivial sum of projectors there.  The remaining terms in the collar give a quasi $3d$ symmetric gapped parent Hamiltonian for $\ket \psi_{S^3}$.  Because $\psi_{S^3}$ was obtained by applying a constant depth circuit in ${\tilde{Y}}$ (``pumping a state out to the boundary of ${\tilde{Y}}$"), it must be an invertible state.

We now use some beliefs about the classification of short range entangled states.
First, it is believed that there are no non-trivial invertible states in $3$ spatial 
dimensions~\cite{Kitaev_pc, Freed, FreedHopkins, GaiottoFreyd}, 
so $\ket{\psi}_{S^3}$ must actually be a $3d$ symmetry protected state.
The classification of $\ZZ_2$ symmetry protected states in $3d$ is trivial~\cite{Wen, Kapustin_cobordism},
so we can apply a symmetric circuit to turn $\ket{\psi}_{S^3}$ 
into a product state in the interiors of the $3$-cells of a triangulation of $S^3$,
resulting in another $\ZZ_2$-symmetric state $\ket {\psi'}_{S^3}$.
But $\ket{\psi'}_{S^3}$ is supported on some subdimensional set of $S^3$
that does not cover all of $S^3$ and hence is null-homologous.
This means that $\ket{\psi'}_{S^3}$ 
can in fact be trivialized by locally nucleating small bubbles of $\ZZ_2$ SPTs.
This implies that the trivial state $\ket{\psi}_{S^3}$ is transformed into the nontrivial 2d cohomology state
via $U_{\tilde X}$, which is absurd.

\begin{acknowledgments}
We thank Michael Freedman for his explanation of Stiefel-Whitney classes and many other useful discussions. 
We thank Anton Kapustin, Dan Freed, and Arun Debray for useful discussions.
N.T. acknowledges the support of the Natural Sciences and Engineering Research Council of Canada (NSERC).
L.F. acknowledges NSF DMR 1519579.
\end{acknowledgments}

\bibliography{gds-ref}
\end{document}